\newcounter{maggicounter}
\newcommand{\maggia}[1]{\textcolor[rgb]{1,0,0}{*}\marginpar{\tiny{\arabic{maggicounter}: #1}}}
\renewcommand{\maggia}[1]{}
\newcommand{\R}{{\mathbb R}}
\newcommand{\N}{{\mathbb N}}
\newcommand{\Z}{{\mathbb Z}}
\newcommand{\wech}[1]{}
\newtheorem{theorem}{Theorem}[section]
\newtheorem{preremark}[theorem]{Remark}
\newenvironment{remark}{\begin{preremark}\normalfont}{\end{preremark}}
\newtheorem{lemma}[theorem]{Lemma}
\begin{document}

\begin{abstract}
An integration by parts formula is derived for the first order differential operator corresponding to the action of translations on the space of locally finite simple configurations of infinitely many points on $\mathbb{R}^d$. As reference measures, tempered grand canonical Gibbs measures are considered corresponding to a non-constant non-smooth intensity (one-body potential) and translation invariant potentials fulfilling the usual conditions. It is proven that such Gibbs measures fulfill the intuitive integration by parts formula if and only if the action of the translation is not broken for this particular measure. The latter is automatically fulfilled in the high temperature and low intensity regime. 
\end{abstract}

\title[An integration by parts formula]{A note on an integration by parts formula for the generators of uniform translations on configuration space}
\author{Florian Conrad, Tobias Kuna}

\date{\today}

\thanks{\textit{2010 AMS Mathematics Subject Classification}: 82B21, 82C22.}

\keywords{Gibbs measures, integration by parts formula, uniform translations.}

\address{\noindent Florian Conrad, Department of Mathematics, University of Kaiserslautern, P.O.Box 3049, 67653
Kaiserslautern, Germany and Mathematics Department, Bielefeld University, P.O.Box 100131, 33501 Bielefeld, Germany.
\newline {\rm \texttt{Email: fconrad@math.uni-bielefeld.de}}\newline
\indent Tobias Kuna, Department of Mathematics, The University of Reading, P.O.Box 220, Reading, Berkshire RG6 6AX/UK.
\newline {\rm \texttt{EMail: t.kuna@reading.ac.uk}}
}

\maketitle

\section{Introduction}
Infinite particle systems are the mathematical framework to describe complex systems of interacting individual objects or agents, like molecules in a liquid, stars in a galaxy, individuals in a population. The elementary states of the systems are all countable collections of points in $\mathbb{R}^d$ which have no accumulation point, i.e.~elements of $\Gamma$, the space of locally finite simple configurations in $\R^d$.

As it is typical for any infinite dimensional system, there does not exist a unique natural reference measure on $\Gamma$ singled out by its symmetry properties with respect to the action of a group.
In this paper, we consider reference measures from $\mathfrak G_{z\sigma}^{\beta\phi}$, the set of all (tempered grand canonical) Gibbs measures which describe a Hamiltonian system in the thermodynamic equilibrium. Gibbs measures are parametrized by the intensity $z\sigma$, pair-potential $\phi$ and inverse temperature $\beta$. 

Gibbs measures transfrom naturally under the standard action on $\Gamma$ of the group of the local diffeomorphisms on $\R^d$. Integration by parts is an infinitesimal manifestation of this natural transformation where a differential operator is the infinitesimal generator of the group action. In \cite{AKR98b} the authors build a calculus on configuration space based on the action of the local diffeomorphism group. Essentially by exploiting the local nature of the group action, they can derive an integration by parts formula for all tempered grand canoncial Gibbs measures. This integration by parts formula is one of the essential ingredients in the construction of a stochastic process using Dirichlet form techniques formally associated to an infinite system of stochastic differential equations driven by white noise, called stochastic gradient dynamics. This dynamics describes an infinite system of particles interacting via a pair potential $\phi$. The integration by parts formula yields the closability of the corresponding pre-Dirichlet form and is necessary for identifying the generator of the Dirichlet form on a set of local functions, i.e. smooth cylinder functions.

In this paper, instead, we consider the non-local action of the usual translation group shifting all points of the configuration simultaneously and in the same manner. We study the corresponding integration by parts formula for Gibbs measures. This integration by parts formula arises naturally, for example, in the construction of the environment process of a tagged particle, that is the movement of the  particles of a system seen from a tagged one. This corresponds heuristically to choosing a coordinate system in which the origin moves with the tagged particle. To give a rigorous meaning to the environment process associated to the aforementioned gradient dynamics, unfortunately, this intuition cannot be used and one has has to start the construction from scratch, for example, using Dirichlet form techniques, cf.~\cite{FG11}. Since the analytical objects (Dirichlet form, generator) used for this approach necessarily contain the information about a uniform component of the environment dynamics, the integration by parts formula for the generator of the translation group, which we derive here, is an important ingredient for this approach.

Let us list the challenges one has to overcome. First, the non-local nature of the translation group leads inevitably to boundary terms in a direct calculation if one uses a finite volume approximation. The derivation of the connection between generator and form in the aforementioned situation given in \cite{GP85} seems to neglect this difficulty and we are not aware of an easy fix that would work in great generality. We circumvent the problem by not using a local approximation and developing a different technique which has already been suggested in \cite{S09}. Second, even if the initial gradient dynamics corresponds to a translation invariant interaction, the associated tagged particle process has as invariant measure a Gibbs measure with non-translation invariant intensity, namely the interaction of the other particles with the tagged one. Moreover, it is physically reasonable that the mutual interaction contains a singular repulsion if particles get too near to each other. Thus we derive the integration by parts formula for general intensities $z\sigma_{\beta\psi}=ze^{-\beta\psi}\,dx$, where $\psi$ is integrable at infinity and $\psi(x)$ may grow at zero like $|x|^{-k}$ for some $k \in \mathbb{N}$, which covers physically relevant cases like e.g.~the Lennard-Jones potential. We include potentials which have singularities almost of an arbitrary nature and are not even everywhere weakly differentiable. Third, one can probably not expect that the integration by parts formula holds for every Gibbs measure, because e.g.~for constant intensities and translation invariant pair potentials there can exist non-translation invariant Gibbs measures due to a phase transition. For such a measure the integration by parts formula would look at least quite different from the intuitive one, if it exists. In the case of non-constant intensities, this problem will prevail or maybe even become worse. Hence, one can expect the integration by parts formula to hold in general only for elements from $\mathfrak G_{z\sigma_{\beta\psi}}^{\beta\phi}$ which are absolutely continuous with respect to a translation invariant element from $\mathfrak G_{zm}^{\beta\phi}$. Indeed, we can characterize that for given potential, inverse temperature and intensity $\sigma_{\beta\psi}$ these are the only measures which obey the intuitive integration by parts formula. 

As a first application, in \cite{CFGK11} the characterization is used to identify (in a natural way) the class of measures $\mu$ for which an invariance principle can be derived for the tagged particle dynamics (mentioned above) constructed from the Dirichlet form corresponding to $\mu$. It shows, for example, that using uniform motion for proving ergodicity of the environment process gives a more general result than an adaptation of the approach from \cite{AKR98b} (where a situation without uniform motion is considered).

While the proof of the integration by parts formula for these measures works under extremely weak assumptions on $\psi$, the characterization we can only show under slightly stronger conditions, which a priori seem not to be necessary. Nevertheless, they still allow nonintegrable singularities of $\nabla\psi$ at isolated points (or sets of sufficiently small dimension), covering in particular the usual non-hardcore pair interactions of statistical mechanics.

\section{Preliminaries and statement of the results}

By $\Gamma$ we denote the space of locally finite simple configurations in $\R^d$, i.e.~subsets $\gamma$ of $\R^d$ such that $\gamma\cap\Lambda$ is finite for all bounded $\Lambda\subset\R^d$. For a function $f: \R^d\to\R^k$, $k\in\N$, and $\gamma\in\Gamma$ we define
$$
\langle f,\gamma\rangle:=\sum_{x\in\gamma}f(x),
$$
if for each component $f_i$, $i=1,\cdots,k$, of $f=(f_1,\cdots,f_k)$ at least $\sum_{x\in\gamma} f_i^+(x)<\infty$ or $\sum_{x\in\gamma} f_i^-(x)<\infty$. Here and below we denote the positive and negative part of a real-valued function $g$ by $g^+$ and $g^-$, respectively. We denote by $\mathcal FC_b^\infty(C_0^\infty(\R^d),\Gamma)$ the set of all functions of the form $F=g_F(\langle f,\cdot\rangle)$ with $k\in\N$, $f=(f_1,\cdots,f_k)\in C_0^\infty(\R^d\to\R^k)$ and $g_F: \R^k\to \R$ infinitely often differentiable, bounded and such that all derivatives are bounded. The formal generator $\nabla_\gamma^\Gamma$ of the uniform translations on $\Gamma$ is given by $\nabla_\gamma^\Gamma F:=\sum_{j=1}^k\partial_j g_F(\langle f,\cdot\rangle)\,\langle \nabla f_j,\cdot\rangle$, $F\in \mathcal FC_b^\infty(C_0^\infty(\R^d),\Gamma)$ as above.

$\Gamma$ is equipped with the $\sigma$-field $\mathcal B$ generated by the mappings $\gamma\mapsto \langle 1_\Lambda,\gamma\rangle$, $\Lambda\subset\R^d$ measurable and bounded. The objects under consideration are tempered grand canonical Gibbs measures on $(\Gamma,\mathcal B)$ for superstable and lower regular pair potentials $\phi$ with intensity measure $\sigma$ having a bounded density w.r.t.~Lebesgue measure, activity $z>0$ and inverse temperature $\beta>0$. Gibbs measures can be defined in several different but completely equivalent ways, see e.g.~\cite{KK03}. The one used here is recalled in the appendix. The set of all tempered grand canonical Gibbs measures for $z$, $\beta$, $\phi$, $\sigma$ is denoted by $\mathfrak G_{z\sigma}^{\beta\phi}$.

By $m$ we denote Lebesgue measure on $\R^d$. Note that if $C<\infty$, and $\psi:\R^d\to[-C,\infty]$ is measurable, then $e^{-\beta\psi}$ is weakly differentiable iff for all $n\in\N$ the function $\psi\wedge n$ is weakly differentiable and $\sup_{n\in\N}\Vert \nabla(\psi\wedge n) e^{-\beta\psi}\Vert_{L^1(K;m)}<\infty$ for all compact $K\subset\R^d$. In this case we define $\nabla\psi:=-(1_{[-C,\infty)}\circ\psi) e^{\beta\psi} \beta^{-1}\nabla e^{-\beta\psi}$ and observe that $\nabla(\psi\wedge n)=(1_{[-C,n]}\circ\psi)\nabla\psi$, $n\in\N$. Let us state the first main result of this note.

\begin{theorem}\label{thm}
Let $\phi:\R^d\to\R\cup\{\infty\}$ be a (measurable, even,) superstable, lower regular potential and let $\beta>0$ and $z>0$. Moreover, let $\psi: \R^d\to\R\cup\{\infty\}$ be measurable, bounded from below and such that $(1-e^{-\beta\psi})\in L^1(\R^d;m)$. Define the measure $\sigma_{\beta\psi}:=e^{-\beta\psi}m$. Then the following assertions hold:
\begin{enumerate}
\item For any $\mu_{zm}^{\beta\phi}\in \mathfrak G_{zm}^{\beta\phi}$ a measure $\mu_{{z\sigma_{\beta\psi}}}^{\beta\phi}\in\mathfrak G_{z\sigma_{\beta\psi}}^{\beta\phi}$ is given by
\begin{equation}\label{eqn:einszueinsabbildung}
\frac{d\mu_{z\sigma_{\beta\psi}}^{\beta\phi}}{d\mu_{zm}^{\beta\phi}}=\Xi_{\psi}^{-1} e^{-\beta\langle\psi,\cdot\rangle},
\end{equation}
where $\Xi_{\psi}:=\int_\Gamma e^{-\beta\langle\psi,\gamma\rangle} d\mu_{zm}^{\beta\phi}(\gamma)$. If $\psi$ is Lebesgue-a.e.~finite, in a similar manner for any element from $\mathfrak G_{z\sigma_{\beta\psi}}^{\beta\phi}$ an element from $\mathfrak G_{zm}^{\beta\phi}$ is obtained and \eqref{eqn:einszueinsabbildung} gives a bijection between $\mathfrak G_{zm}^{\beta\phi}$ and $\mathfrak G_{z\sigma_{\beta\psi}}^{\beta\phi}$.
\item If, in addition to the assumptions preceding (i), $e^{-\beta\psi}$ is weakly differentiable, $\nabla\psi$ (defined as above) is integrable w.r.t.~$\sigma_{\beta\psi}$, and $\mu_{zm}^{\beta\phi}\in\mathfrak G_{zm}^{\beta\phi}$ is translation invariant, then for $\mu_{z\sigma_{\beta\psi}}^{\beta\phi}$ as in (i) we obtain the following integration by parts formula: For every $F\in \mathcal FC_b^\infty(C_0^\infty(\R^d),\Gamma)$ it holds
\begin{equation}\label{eqn:ibp}
\int_\Gamma \nabla_\gamma^\Gamma F\,d\mu_{z\sigma_{\beta\psi}}^{\beta\phi}=\beta\int_\Gamma F\langle\nabla\psi,\cdot\rangle\,d\mu_{z\sigma_{\beta\psi}}^{\beta\phi}.
\end{equation}
\end{enumerate}
\end{theorem}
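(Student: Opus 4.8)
The plan is to establish (i) by a reweighting of the local specifications and then to deduce the integration by parts formula (ii) from (i) together with the translation invariance of $\mu_{zm}^{\beta\phi}$, via a Fubini argument that sidesteps any uniform-in-$t$ domination.

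For (i): the local specification kernel on a bounded window $\Lambda$ for intensity $z\sigma$ and potential $\phi$ has a density, against the Lebesgue--Poisson measure, proportional to $z^{|\gamma_\Lambda|}e^{-\beta E_\Lambda(\gamma)}\prod_{x\in\gamma_\Lambda}\tfrac{d\sigma}{dm}(x)$. Replacing $\sigma=m$ by $\sigma_{\beta\psi}=e^{-\beta\psi}m$ multiplies this density by $\prod_{x\in\gamma_\Lambda}e^{-\beta\psi(x)}=e^{-\beta\langle\psi 1_\Lambda,\gamma\rangle}$. Since $e^{-\beta\langle\psi,\cdot\rangle}$ factorises as $e^{-\beta\langle\psi 1_\Lambda,\cdot\rangle}e^{-\beta\langle\psi 1_{\Lambda^c},\cdot\rangle}$ and the second factor depends only on $\gamma\cap\Lambda^c$, inserting $\Xi_\psi^{-1}e^{-\beta\langle\psi,\cdot\rangle}$ into the equilibrium (DLR) equations of $\mu_{zm}^{\beta\phi}$ yields precisely the equations for $z\sigma_{\beta\psi}$; this gives \eqref{eqn:einszueinsabbildung}. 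Well-definedness rests on $0<\Xi_\psi<\infty$: writing $e^{-\beta\langle\psi,\gamma\rangle}=\prod_{x\in\gamma}(1+h(x))$ with $h:=e^{-\beta\psi}-1\in L^1(\R^d;m)$ (the hypothesis $(1-e^{-\beta\psi})\in L^1$), Ruelle's bound $\rho^{(n)}\le\xi^n$ on the correlation functions of $\mu_{zm}^{\beta\phi}$ gives $\int_\Gamma\prod_{x\in\gamma}(1+|h(x)|)\,d\mu_{zm}^{\beta\phi}\le\sum_n\frac{(\xi\Vert h\Vert_{L^1})^n}{n!}=e^{\xi\Vert h\Vert_{L^1}}<\infty$, so the density is integrable, and it is strictly positive on a set of positive measure, whence $\Xi_\psi>0$. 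When $\psi<\infty$ $m$-a.e.\ the inverse assignment is the tilt by $e^{+\beta\langle\psi,\cdot\rangle}$, yielding the asserted bijection.

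For (ii) it suffices to treat each coordinate direction $e_i$. Writing $g:=e^{-\beta\psi}$, set $H:=F\,e^{-\beta\langle\psi,\cdot\rangle}$ and let $\theta_v\gamma:=\{x+v:x\in\gamma\}$ and $\partial^\Gamma_{e_i}G(\gamma):=\frac{d}{dt}\big|_{t=0}G(\theta_{te_i}\gamma)$, so that $(\nabla^\Gamma_\gamma F)_i=\partial^\Gamma_{e_i}F$ and, using the definition of $\nabla\psi$, $\partial^\Gamma_{e_i}H=(\partial^\Gamma_{e_i}F-\beta F\langle\partial_{e_i}\psi,\cdot\rangle)\,e^{-\beta\langle\psi,\cdot\rangle}$ for $\mu_{zm}^{\beta\phi}$-a.e.\ $\gamma$. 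The engine of the proof is the chain
\begin{align*}
0=\int_\Gamma\big(H(\theta_{e_i}\gamma)-H(\gamma)\big)\,d\mu_{zm}^{\beta\phi}
&=\int_\Gamma\int_0^1(\partial^\Gamma_{e_i}H)(\theta_{te_i}\gamma)\,dt\,d\mu_{zm}^{\beta\phi}\\
&=\int_0^1\int_\Gamma(\partial^\Gamma_{e_i}H)(\theta_{te_i}\gamma)\,d\mu_{zm}^{\beta\phi}\,dt
=\int_\Gamma\partial^\Gamma_{e_i}H\,d\mu_{zm}^{\beta\phi},
\end{align*}
whose first equality is the translation invariance of $\mu_{zm}^{\beta\phi}$, whose second is the fundamental theorem of calculus along a.e.\ path, whose third is Fubini, and whose last again uses translation invariance together with $\frac{d}{dt}H(\theta_{te_i}\gamma)=(\partial^\Gamma_{e_i}H)(\theta_{te_i}\gamma)$, which makes the inner integral independent of $t$. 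Dividing the resulting identity $\int_\Gamma\partial^\Gamma_{e_i}H\,d\mu_{zm}^{\beta\phi}=0$ by $\Xi_\psi$ and recalling \eqref{eqn:einszueinsabbildung} is exactly the $e_i$-component of \eqref{eqn:ibp}.

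The two points to secure are the path absolute continuity used in the second equality and the global integrability making Fubini and the last equality legitimate; the latter is the easy one. By \eqref{eqn:einszueinsabbildung}, boundedness of $F$, the elementary estimate $|\partial^\Gamma_{e_i}F|\le C\langle 1_K,\cdot\rangle$ on a fixed compact $K$, and the Ruelle bound $\rho^{(1)}_{z\sigma_{\beta\psi}}(x)\le\xi z\,e^{-\beta\psi(x)}$, one has $\int_\Gamma|\partial^\Gamma_{e_i}H|\,d\mu_{zm}^{\beta\phi}=\Xi_\psi\int_\Gamma|\partial^\Gamma_{e_i}F-\beta F\langle\partial_{e_i}\psi,\cdot\rangle|\,d\mu_{z\sigma_{\beta\psi}}^{\beta\phi}<\infty$, the $\langle\partial_{e_i}\psi,\cdot\rangle$ term being controlled precisely by the hypothesis $\nabla\psi\in L^1(\sigma_{\beta\psi})$ (equivalently $\nabla e^{-\beta\psi}\in L^1$); crucially, translation invariance makes $\int_\Gamma|(\partial^\Gamma_{e_i}H)(\theta_{te_i}\cdot)|\,d\mu_{zm}^{\beta\phi}$ equal to this $t$-independent finite value, so no uniform dominator is needed. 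The genuine obstacle is the a.e.\ absolute continuity of $t\mapsto H(\theta_{te_i}\gamma)=F(\theta_{te_i}\gamma)\prod_{x\in\gamma}g(x+te_i)$: the factor $F(\theta_{te_i}\gamma)$ is smooth in $t$, but the infinite product must be shown to be absolutely continuous for $\mu_{zm}^{\beta\phi}$-a.e.\ $\gamma$. Here the weak differentiability of $g=e^{-\beta\psi}$ is decisive: it upgrades $h\in L^1$ to integrable directional oscillation, $\sup_{|s|\le1}|h(\cdot+se_i)|\le|h|+G\in L^1$ with $G(x):=\int_0^1|\partial_{e_i}g(x+se_i)|\,ds$ and $\Vert G\Vert_{L^1}=\Vert\partial_{e_i}g\Vert_{L^1}<\infty$, so that $\sum_{x\in\gamma}\sup_{|s|\le1}|h(x+se_i)|<\infty$ and $\langle G,\gamma\rangle<\infty$ hold $\mu_{zm}^{\beta\phi}$-a.s.\ (by the Campbell formula, since both bounding functions are in $L^1$ and $\rho^{(1)}$ is bounded). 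These two summability statements give uniform convergence and bounded total variation of the partial products, hence the required absolute continuity together with $\frac{d}{dt}\prod_{x\in\gamma}g(x+te_i)=-\beta\langle\partial_{e_i}\psi,\theta_{te_i}\gamma\rangle\prod_{x\in\gamma}g(x+te_i)$ (using $\partial_{e_i}e^{-\beta\psi}=-\beta e^{-\beta\psi}\partial_{e_i}\psi$, the defining relation for $\nabla\psi$), which identifies the pathwise derivative with $(\partial^\Gamma_{e_i}H)(\theta_{te_i}\gamma)$ and closes the argument.
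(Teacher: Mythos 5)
Your part (i) is in substance the paper's own argument: you insert the tilt $\Xi_\psi^{-1}e^{-\beta\langle\psi,\cdot\rangle}$ into the defining equations of the Gibbs measure (you use the DLR/specification form, the paper the Ruelle equation (R) from its appendix -- equivalent formulations), and you get $0<\Xi_\psi<\infty$ from the expansion $\int_\Gamma\prod_{x\in\gamma}(1+\vert h(x)\vert)\,d\mu_{zm}^{\beta\phi}\leq e^{\xi\Vert h\Vert_{L^1}}$, which is exactly what underlies the paper's Lemma \ref{lem}(iv). Part (ii), however, is a genuinely different route. The paper never differentiates along translation paths for general $\psi$: it proves \eqref{eqn:ibp} first for $\psi=0$ on an enlarged class of cylinder functions (exponentially bounded $g_F$) by difference quotients and dominated convergence, passes to $\psi\in C_0^\infty(\R^d)$ by the substitution $F\mapsto Fe^{-\beta\langle\psi,\cdot\rangle}$, and then runs three successive approximations (mollification for bounded compactly supported $H^{1,1}$, cutoffs for bounded $H^{1,1}$, truncation $\psi\wedge n$ in general), the last one requiring a careful dominated-convergence argument on the Campbell measure $\mu_*$. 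You collapse all of this into one pathwise fundamental-theorem-of-calculus argument for $t\mapsto H(\theta_{te_i}\gamma)$, $H=Fe^{-\beta\langle\psi,\cdot\rangle}$, using translation invariance twice; your observation that invariance makes $\int_\Gamma\vert(\partial^\Gamma_{e_i}H)(\theta_{te_i}\cdot)\vert\,d\mu_{zm}^{\beta\phi}$ independent of $t$, so Fubini needs no uniform-in-$t$ dominator, is precisely what allows a one-shot proof where the paper needs layered approximations. What your route buys is brevity and directness; what it costs is that the entire difficulty concentrates in the pathwise regularity of the infinite product $\prod_{x\in\gamma}g(x+te_i)$, and there your write-up has one flawed step plus two points that must be made explicit.

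The flawed step: you justify absolute continuity of the limit by ``uniform convergence and bounded total variation of the partial products.'' That implication is false -- the Cantor function is a uniform limit of absolutely continuous functions with uniformly bounded variation. Fortunately, the two summability facts you did establish suffice, via a different argument: writing $P_N(t):=\prod_{j\leq N}g(x_j+te_i)$ and $S:=\sum_{x\in\gamma}\sup_{s\in[0,1]}\vert h(x+se_i)\vert<\infty$, one has $\vert P_N'(t)\vert\leq e^{S}\sum_{x\in\gamma}\vert\partial_{e_i}g(x+te_i)\vert$, and the dominating function is in $L^1([0,1];dt)$ because $\langle G,\gamma\rangle<\infty$; since $P_N'$ also converges pointwise a.e.\ to $Q(t):=\sum_{x\in\gamma}\partial_{e_i}g(x+te_i)\prod_{y\in\gamma,\,y\neq x}g(y+te_i)$, dominated convergence gives $P_N'\to Q$ in $L^1([0,1])$, hence $P=\lim_N P_N$ is absolutely continuous with $P'=Q$ a.e. The two points to make explicit: first, the inequality $\sup_{s\in[0,1]}\vert h(x+se_i)\vert\leq\vert h(x)\vert+G(x)$ holds only for the representative of $g$ that is absolutely continuous on a.e.\ line parallel to $e_i$ (the ACL property of $W^{1,1}_{\mathrm{loc}}$ functions), and the exceptional lines sweep out a Lebesgue-null set which configurations avoid $\mu_{zm}^{\beta\phi}$-a.s.\ only by virtue of Lemma \ref{lem}(i) -- this must be invoked. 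Second, $\{\psi=\infty\}$ may have positive (finite) Lebesgue measure, hence is charged by $\mu_{zm}^{\beta\phi}$; there $g=0$ and $\nabla g=0$ a.e., which matches the paper's convention $\nabla\psi:=0$ on that set, and this is what makes the identification $Q(t)=-\beta\langle\partial_{e_i}\psi,\theta_{te_i}\gamma\rangle P(t)$, and hence $Q(t)=(\partial^\Gamma_{e_i}H)(\theta_{te_i}\gamma)$, valid a.e.\ even when the product vanishes. With these repairs your argument is complete and correct.
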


\begin{remark}\label{rem}
Let $\phi$ be a potential fulfilling the assumptions of Theorem \ref{thm}. 
\begin{enumerate}
\item It is well-known that if $\mu\in\mathfrak G_{z\sigma}^{\beta\phi}$ for $\sigma=m$, $\beta>0$ and $z> 0$, then $\mu$ fulfills the so-called Ruelle bound (see \cite[Eq.~(5.28)]{Ru70} for the meaning of this statement). By analyzing the proof of the last part of Corollary 5.3 in \cite{Ru70} it is not difficult to see that the Ruelle bound extends to all $\mu\in\mathfrak G_{z\sigma}^{\beta\phi}$ in the case when $\sigma$ has a bounded density w.r.t.~Lebesgue measure, i.e.~in particular when $\sigma=\sigma_{\beta\psi}$ as defined in Theorem \ref{thm}. 
\item If additionally $\int_{\R^d}\vert e^{-\beta\phi}-1\vert\,dm<\infty$, it is known (see \cite{Ru70}) that $\mathfrak G_{zm}^{\beta\phi}\neq\emptyset$ and there also exists a translation invariant element of $\mathfrak G_{zm}^{\beta\phi}$. Thus Theorem \ref{thm}(ii) implies the existence of an element of $\mathfrak G_{z\sigma_{\beta\psi}}^{\beta\phi}$ fulfilling \eqref{eqn:ibp}.
\item Although in the situation from (ii) one could derive from Theorem \ref{thm}(i) also that $\mathfrak G_{z\sigma_{\beta\psi}}^{\beta\phi}\neq \emptyset$, it is more natural to derive this existence result from the construction in \cite{Ru70} or \cite{KK03}, since \cite[Proposition 2.6]{Ru70} is easily seen to extend to the case of intensity measures $\sigma$ having bounded density w.r.t.~Lebesgue measure.
\end{enumerate}
\end{remark}

Let $\mu_{zm}^{\beta\phi}$ and $\mu_{z\sigma_{\beta\psi}}^{\beta\phi}$ be as in Theorem \ref{thm}(i) and let $\psi$ 	fulfill the additional assumptions from Theorem \ref{thm}(ii). The question arises whether \eqref{eqn:ibp} is not only implied by but even characterizes translation invariance of $\mu_{zm}^{\beta\phi}$. This is a natural conjecture, because translation invariance of $\mu_{zm}^{\beta\phi}$ is equivalent to $\mu_{z\sigma_{\beta\psi}}^{\beta\phi}$ being quasi-invariant w.r.t.~the translations $\theta_v: \gamma\mapsto \gamma+v$, $v\in\R^d$, with density $\frac{d\mu_{z\sigma_{\beta\psi}}^{\beta\phi}\circ \theta_v^{-1}}{d\mu_{z\sigma_{\beta\psi}}^{\beta\phi}}(\gamma)=e^{-\beta\langle \psi,\gamma-v\rangle+\beta\langle \psi,\gamma\rangle}$ and because \eqref{eqn:ibp} is just the differential version of the latter statement. Here we define $\gamma+v:=\{x+v\,|\,x\in\gamma\}$ for $\gamma\in\Gamma$, $v\in\R^d$. In the next theorem we verify this conjecture under some more conditions on $\psi$; the difficulties to treat the general case are explained in Remark \ref{rem:notgeneral} below.

\begin{theorem}\label{thm2}
Assume that $d\geq 2$. Let $\phi$, $\psi$ be as in Theorem \ref{thm}(ii) and assume additionally that $\psi$ is weakly differentiable in $\R^d\setminus\{0\}$ and $\nabla\psi\in L^1(\R^d\setminus B_1(0))$, where $B_1(0)$ denotes the ball around $0$ with radius $1$. Let $\mu_{zm}^{\beta\phi}$, $\mu_{z\sigma_{\beta\psi}}^{\beta\phi}$ be as in Theorem \ref{thm}(i). Then $\mu_{zm}^{\beta\phi}$ is translation invariant iff $\mu_{z\sigma_{\beta\psi}}^{\beta\phi}$ fulfills \eqref{eqn:ibp}.
\end{theorem}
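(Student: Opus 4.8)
The plan is to prove only the nontrivial implication, since ``$\mu_{zm}^{\beta\phi}$ translation invariant $\Rightarrow$ \eqref{eqn:ibp}'' is exactly Theorem~\ref{thm}(ii). Write $\nu:=\mu_{zm}^{\beta\phi}$ and $\mu:=\mu_{z\sigma_{\beta\psi}}^{\beta\phi}$, so that by Theorem~\ref{thm}(i) one has $d\mu=\Xi_\psi^{-1}e^{-\beta\langle\psi,\cdot\rangle}\,d\nu$ with a $\nu$-a.e.\ strictly positive density ($\psi$ being a.e.\ finite by weak differentiability off the origin). Assume henceforth that \eqref{eqn:ibp} holds; it then suffices to prove that $\nu$ is translation invariant.

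First I would reduce translation invariance of $\nu$ to the intrinsic infinitesimal identity
$$\int_\Gamma\nabla_\gamma^\Gamma G\,d\nu=0,\qquad G\in\mathcal FC_b^\infty(C_0^\infty(\R^d),\Gamma).$$
This is the only place the group structure of the translations is used, and it closes cleanly: for fixed $G$ and $v\in\R^d$ the shifted function $G_s:=G\circ\theta_{sv}$ again lies in $\mathcal FC_b^\infty(C_0^\infty(\R^d),\Gamma)$, and $H(s):=\int_\Gamma G(\gamma+sv)\,d\nu$ is differentiable with $H'(s)=\int_\Gamma v\cdot\nabla_\gamma^\Gamma G_s\,d\nu=0$ by the displayed identity applied to $G_s$, differentiation under the integral being justified by the Ruelle bound from Remark~\ref{rem}(i) together with the compact support of the data of $G$. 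Hence $H$ is constant, $\int_\Gamma G(\gamma+sv)\,d\nu=\int_\Gamma G\,d\nu$ for all $s$ and $v$, which is translation invariance of $\nu$.

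It remains to deduce the displayed identity from \eqref{eqn:ibp}. Rewriting the left-hand side through $\mu$ and using the formal Leibniz rule gives
$$\int_\Gamma\nabla_\gamma^\Gamma G\,d\nu=\Xi_\psi\int_\Gamma\big(\nabla_\gamma^\Gamma(Ge^{\beta\langle\psi,\cdot\rangle})-\beta\,Ge^{\beta\langle\psi,\cdot\rangle}\langle\nabla\psi,\cdot\rangle\big)\,d\mu,$$
so the identity would follow at once if \eqref{eqn:ibp} were applicable to the non-admissible test function $Ge^{\beta\langle\psi,\cdot\rangle}$. To make this rigorous I would truncate the singularity by $\psi_N:=\psi\wedge N$ and apply \eqref{eqn:ibp}, extended beforehand from $\mathcal FC_b^\infty(C_0^\infty(\R^d),\Gamma)$ to functions of the form $Ge^{\beta\langle\psi_N,\cdot\rangle}$ (a routine approximation of $\psi_N$ by $C_0^\infty$ potentials together with a particle-number cutoff, controlled by the Ruelle bound). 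Since $\nabla\psi_N=1_{\{\psi\le N\}}\nabla\psi$, cancelling the common term yields
$$\int_\Gamma e^{\beta\langle\psi_N,\cdot\rangle}\,\nabla_\gamma^\Gamma G\,d\mu=\beta\int_\Gamma Ge^{\beta\langle\psi_N,\cdot\rangle}\,\langle 1_{\{\psi>N\}}\nabla\psi,\cdot\rangle\,d\mu.$$
Letting $N\to\infty$, on the left $e^{\beta\langle\psi_N-\psi,\cdot\rangle}\le 1$ increases to $1$, so by dominated convergence (using $d\mu=\Xi_\psi^{-1}e^{-\beta\langle\psi,\cdot\rangle}d\nu$) the left-hand side converges to $\Xi_\psi^{-1}\int_\Gamma\nabla_\gamma^\Gamma G\,d\nu$, and the proof is complete once the right-hand side is shown to vanish.

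The hard part is exactly this last vanishing. The error is carried by $1_{\{\psi>N\}}\nabla\psi$, which concentrates near the singularity at the origin, where $\nabla\psi$ need not be integrable, so the conclusion cannot come from the smallness of the region alone. The decisive gain is the exponential suppression: a point $x$ with $\psi(x)>N$ contributes the weight factor $e^{-\beta(\psi(x)-N)}$, and after converting the right-hand side to an integral against $\nu$ and applying the first-order Campbell (Mecke) formula, one bounds it against the bounded first correlation function of $\nu$ (Remark~\ref{rem}(i)) by $C\beta^{-1}e^{\beta N}\int_{\{\psi>N\}}|\nabla e^{-\beta\psi}|\,dm$, which tends to $0$ because the exponential decay of $e^{-\beta\psi}$ on $\{\psi>N\}$ beats the factor $e^{\beta N}$. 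Controlling the configurations carrying several points near the origin requires the higher-order Ruelle bounds together with the product weight, and this is where the hypotheses $d\ge 2$ (a codimension/zero-capacity argument rendering the point $\{0\}$ negligible for the flow) and $\nabla\psi\in L^1(\R^d\setminus B_1(0))$ (for the contributions away from the singularity and at infinity) enter. These estimates are the technical core of the argument and explain why the characterization, unlike Theorem~\ref{thm}, is obtained only under the strengthened assumptions on $\psi$.
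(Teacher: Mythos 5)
Your reduction steps are fine: the passage from $\int_\Gamma\nabla_\gamma^\Gamma G\,d\nu=0$ to translation invariance is sound, and your truncated identity is legitimate --- indeed it is exactly \eqref{eqn:pretinv} of Lemma~\ref{lem2} applied with $\varphi=\psi\wedge N$, which satisfies that lemma's hypotheses. The fatal gap is the very last step, the claim that
\[
e^{\beta N}\int_{\{\psi>N\}}\vert\nabla\psi\vert\,e^{-\beta\psi}\,dm\;\longrightarrow\;0 .
\]
This does \emph{not} follow from the hypotheses of Theorem~\ref{thm2}, and it is precisely the obstruction the paper isolates in Remark~\ref{rem:notgeneral} (there phrased for the attempt $\varphi_n=\psi\wedge n$, i.e.\ for your truncation). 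On $\{\psi>N\}$ one only has $e^{\beta N}e^{-\beta\psi}\le 1$, so your bound is controlled only by $\int_{\{\psi>N\}}\vert\nabla\psi\vert\,dm$ --- and $\nabla\psi$ is \emph{not} assumed integrable near the origin; the assumptions give only $\nabla\psi\,e^{-\beta\psi}\in L^1(\R^d;m)$ and $\nabla\psi\in L^1(\R^d\setminus B_1(0);m)$. A concrete counterexample: in $d=2$ take a radial $\psi$ which, on disjoint annuli $A_j\downarrow\{0\}$, oscillates rapidly between the levels $j$ and $j+1/\beta$, with frequencies chosen so that $\int_{A_j}\vert\nabla\psi\vert\,dm=c>0$ for every $j$. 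Then $\int\vert\nabla\psi\vert e^{-\beta\psi}dm\le\sum_j c\,e^{-\beta j}+\dots<\infty$ and all hypotheses of Theorem~\ref{thm2} hold, yet $e^{\beta j}\int_{\{\psi>j\}}\vert\nabla\psi\vert e^{-\beta\psi}\,dm\ge e^{-1}c$ for every $j$. So the error term in your scheme need not vanish: the ``exponential suppression beats $e^{\beta N}$'' heuristic works for monotone singularities like $\vert x\vert^{-k}$, but not under the stated assumptions.

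The paper's proof avoids any truncation of $\psi$ in the vertical (level) direction. Instead it excises a spatial neighborhood of the singularity, setting $\varphi_\varepsilon=\psi h_\varepsilon+(1-h_\varepsilon)\inf\psi$ with $h_\varepsilon$ vanishing on $B_{\varepsilon/2}(0)$, applies Lemma~\ref{lem2} with this $\varphi_\varepsilon$, and --- crucially --- multiplies the test function by the configuration-space cutoff $g(\langle\chi_\varepsilon,\cdot\rangle)$, which vanishes on every configuration having a point within $\varepsilon$ of the whole translation segment $\{sv\,|\,s\in[0,1]\}$. On the support of the resulting integrand one has $\varphi_\varepsilon=\psi$ and $\nabla\varphi_\varepsilon=\nabla\psi$ at all points of the configuration, so the error terms vanish \emph{identically}, with no limiting procedure at that stage; the fundamental theorem of calculus in $s$ then yields translation invariance of the cut-off observables, and the cutoff is removed by letting $\varepsilon\to0$, where $d\ge2$ enters because the tube $U_{2\varepsilon}$ around the segment has Lebesgue measure tending to zero (note your sketch never actually uses $d\ge2$ in a concrete way). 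To rescue your strategy you would have to replace the level-truncation $\psi\wedge N$ by this kind of spatial excision combined with a cutoff forcing configurations to avoid the excised region.
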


\begin{remark}
\begin{enumerate}
\item For the applications mentioned in the introduction, the additional restrictions in the previous theorem are irrelevant. They still allow the usual non-hardcore pair potentials from statistical physics, e.g.~the Lennard-Jones potential. These potentials are usually bounded outisde any neighborhood of the origin.
\item For $d=1$ the conclusion of Theorem \ref{thm2} is in most cases trivial: In this case usually $\mathfrak G_{zm}^{\beta\phi}$ consists for all $z,\beta>0$ only of one element, which is automatically translation invariant. This was shown in \cite{Pap87} for superstable potentials $\phi$ which are bounded on the complement of any neighborhood of $0$ and have the property that there exists a decreasing function $\varphi$ such that $\vert\phi(x)\vert\leq \varphi(\vert x\vert)$, $\vert x\vert \geq R>0$, and $\int_R^\infty \varphi(x)\,dx<\infty$. In particular, these conditions cover the usual type of potentials from statistical mechanics.
\item The proof of Theorem \ref{thm2} given below extends to the case when one only assumes that $\psi$ is (as in Theorem 2.1(ii) and) weakly differentiable in $\R^d\setminus K$ for some compact $K\subset\R^d$ having Hausdorff dimension strictly less than $d-1$ (instead of choosing $K=\{0\}$) and $\nabla\phi\in L^1(\R^d\setminus B_R(0);m)$ with $R$ large enough such that $K\subset B_R(0)$.
\item Theorem \ref{thm2} shows that the one-to-one correspondence from Theorem \ref{thm}(i) extends also to a one-to-one correspondence between the set of translation invariant elements from $\mathfrak G_{zm}^{\beta\phi}$ and the set of elements from $\mathfrak G_{z\sigma_{\beta\psi}}^{\beta\phi}$ fulfilling \eqref{eqn:ibp}. As one easily verifies, the correspondence preserves the structure of these sets in the sense that extremal elements in the former set (pure phases) correspond to extremal elements in the latter set.
\end{enumerate}
\end{remark}

\section{Proofs}

For a measure $\sigma$ on $\R^d$ and a measurable function $f: \R^d\to\R$ we define $C_{f,\sigma}:=\int_{\R^d}\vert e^f-1\vert\,d\sigma$. We need the following lemma, which is essentially contained in \cite{KK02}, \cite{KK03}. 
\begin{lemma}\label{lem}
Let $\sigma$ be a measure on $\R^d$ having a bounded density w.r.t.~$m$ and let $\mu_{z\sigma}^{\beta\phi}\in\mathfrak G_{z\sigma}^{\beta\phi}$.
\begin{enumerate}
\item Let $N\subset\R^d$ with $\sigma(N)=0$. Then $\mu_{z\sigma}^{\beta\phi}$-a.s.~it holds $\gamma\subset\R^d\setminus N$.
\item Let $f\in L^1(\R^d;\sigma)$. Then $\langle f,\cdot\rangle=\sum_{x\in\cdot}f(x)$ converges absolutely $\mu_{z\sigma}^{\beta\phi}$-a.s. Moreover, $\Vert \langle f,\cdot\rangle\Vert_{L^1(\Gamma;\mu_{z\sigma}^{\beta\phi})}\leq \xi_\sigma\Vert f\Vert_{L^1(\R^d;\sigma)}$ for some $\xi_\sigma<\infty$ which is independent of $f$.
\item Let $\vert f\vert\wedge 1\in L^1(\R^d;\sigma)$ and $f$ be finite $\sigma$-a.e., then $\langle f,\cdot\rangle=\sum_{x\in\cdot}f(x)$ converges absolutely $\mu_{z\sigma}^{\beta\phi}$-a.s. If only $f^+$ is $\sigma$-a.e.~finite then for $\mu_{z\sigma}^{\beta\phi}$-a.e.~$\gamma\in\Gamma$ there exists a finite configuration $\eta\subset\gamma$ such that $f(x)>-\infty$ for all $x\in \gamma\setminus\eta$ and $\sum_{x\in\gamma\setminus \eta}f(x)$ converges absolutely. Moreover one can choose $\eta=\emptyset$ with nonzero $\mu_{z\sigma}^{\beta\phi}$-probability.
\item Let $f:\R^d\to\R\cup\{-\infty\}\cup\{\infty\}$ be measurable and such that $C_{f,\sigma}<\infty$. Then $e^{\langle f,\cdot\rangle}:\Gamma\to [0,\infty)$ is well-defined and integrable w.r.t.~$\mu_{z\sigma}^{\beta\phi}$. $e^{\langle f,\cdot\rangle}$ is positive with positive $\mu_{z\sigma}^{\beta\phi}$-probability, and if $f$ is $\sigma$-a.e.~finite, then $e^{\langle f,\cdot\rangle}$ is $\mu_{z\sigma}^{\beta\phi}$-a.s.~positive.
\end{enumerate}
\end{lemma}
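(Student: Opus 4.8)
The plan is to derive all four parts from the \emph{Ruelle bound} recalled in Remark~\ref{rem}(i). This bound guarantees that the correlation functions $\rho_n$ of $\mu_{z\sigma}^{\beta\phi}$ with respect to $\sigma$ exist and satisfy $\rho_n\le\xi_\sigma^n$ $\sigma^{\otimes n}$-a.e.\ for a finite constant $\xi_\sigma$ that depends on $z,\beta,\phi$ and the bound on the density of $\sigma$, but \emph{not} on $n$ or on the test function. In particular $\rho_1\le\xi_\sigma$, so the Campbell-type first moment identity gives, for every measurable $h:\R^d\to[0,\infty]$,
\[
\int_\Gamma\langle h,\cdot\rangle\,d\mu_{z\sigma}^{\beta\phi}=\int_{\R^d}h\,\rho_1\,d\sigma\le\xi_\sigma\int_{\R^d}h\,d\sigma .
\]
Taking $h=1_N$ shows that the expected number of points in $N$ vanishes when $\sigma(N)=0$, which is exactly (i). Taking $h=|f|$ gives $\|\langle f,\cdot\rangle\|_{L^1(\mu_{z\sigma}^{\beta\phi})}\le\xi_\sigma\|f\|_{L^1(\R^d;\sigma)}<\infty$, which simultaneously yields the $\mu_{z\sigma}^{\beta\phi}$-a.s.\ absolute convergence and the norm bound of (ii).

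For (iii) I would split at the level of $\R^d$ using $A:=\{|f|>1\}$. Since $|f|\wedge1\in L^1(\sigma)$ and $|f|\wedge1=1$ on $A$, we have $\sigma(A)<\infty$, so the first moment bound forces $\gamma\cap A$ to be $\mu_{z\sigma}^{\beta\phi}$-a.s.\ finite. On $A^c$ one has $|f|=|f|\wedge1$, hence $f1_{A^c}\in L^1(\sigma)$ and (ii) makes $\sum_{x\in\gamma\cap A^c}f(x)$ absolutely convergent a.s. If $f$ is $\sigma$-a.e.\ finite, then $\{|f|=\infty\}$ is $\sigma$-null, so by (i) the finitely many points of $\gamma\cap A$ all carry finite values and the full sum converges absolutely. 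If only $f^+$ is $\sigma$-a.e.\ finite, then $\{f=-\infty\}\subset A$ has finite $\sigma$-measure and I set $\eta:=\gamma\cap A$, which is finite a.s.; then $f>-\infty$ on $\gamma\setminus\eta\subset A^c$ and $\sum_{x\in\gamma\setminus\eta}f(x)$ converges absolutely as before.

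Part (iv) carries the real content. Well-definedness is immediate: $C_{f,\sigma}<\infty$ forces $\sigma(\{f=\infty\})=0$, so by (i) we have $f<\infty$ on $\gamma$ a.s., while $g:=e^f-1$ satisfies $|g|\in L^1(\sigma)$, whence by (ii) $\sum_{x\in\gamma}|g(x)|<\infty$ a.s.\ and the product $e^{\langle f,\gamma\rangle}=\prod_{x\in\gamma}(1+g(x))\in[0,\infty)$ converges. For integrability I would expand the generating functional in correlation functions (the part essentially contained in \cite{KK02}, \cite{KK03}),
\[
\int_\Gamma e^{\langle f,\cdot\rangle}\,d\mu_{z\sigma}^{\beta\phi}=\sum_{n=0}^\infty\frac1{n!}\int_{(\R^d)^n}g^{\otimes n}\,\rho_n\,d\sigma^{\otimes n},
\]
and bound it using $\rho_n\le\xi_\sigma^n$ by $\sum_n\frac1{n!}(\xi_\sigma C_{f,\sigma})^n=e^{\xi_\sigma C_{f,\sigma}}<\infty$, the absolute convergence of the series (again via $\rho_n\le\xi_\sigma^n$) justifying the interchange. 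Finally, since $e^{\langle f,\cdot\rangle}\ge0$, positivity with positive probability is equivalent to $\int_\Gamma e^{\langle f,\cdot\rangle}\,d\mu_{z\sigma}^{\beta\phi}>0$, and because the product is positive exactly when $\gamma$ avoids $B:=\{f=-\infty\}=\{g=-1\}$, this reduces to the void-probability bound $\mu_{z\sigma}^{\beta\phi}(\gamma\cap B=\emptyset)>0$ for the finite-$\sigma$-measure set $B$. In the special case that $f$ is $\sigma$-a.e.\ finite, $\sigma(B)=0$ and (i) gives positivity $\mu_{z\sigma}^{\beta\phi}$-a.s.\ directly.

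The step I expect to be the main obstacle is precisely this void-probability lower bound, which also underlies the ``$\eta=\emptyset$ with positive probability'' claim in (iii) (the relevant set there being the same $B$). A Markov estimate only yields $\mu_{z\sigma}^{\beta\phi}(\gamma\cap B=\emptyset)\ge1-\xi_\sigma\sigma(B)$, which settles the case of small $\sigma(B)$ but is useless once $\sigma(B)$ is large, and indeed finiteness of the intensity alone does not force positive void probabilities. For general finite-measure $B$ I would partition $B=\bigsqcup_k B_k$ into pieces supported in a growing family of bounded cells; the summability $\sum_k\mu_{z\sigma}^{\beta\phi}(\gamma\cap B_k\ne\emptyset)\le\xi_\sigma\sigma(B)<\infty$ suggests a product lower bound $\prod_k(1-\mu_{z\sigma}^{\beta\phi}(\gamma\cap B_k\ne\emptyset))>0$, which one makes rigorous through the DLR description of $\mu_{z\sigma}^{\beta\phi}$ and the quasi-independence inherent in the Gibbs specification (conditionally on the configuration outside a bounded region, the law is mutually absolutely continuous with respect to the Poisson measure, with positive density where the energy is finite, superstability and lower regularity of $\phi$ ensuring finiteness of the partition functions). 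This quantitative lower bound is the delicate point, and it is where \cite{KK02}, \cite{KK03} do the heavy lifting.
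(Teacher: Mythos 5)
Your handling of (i)--(iii) and of the integrability claim in (iv) is essentially the paper's own argument: the paper also gets (i) and (ii) from the Ruelle bound via \cite{KK02}, performs exactly your splitting at $A=\{\vert f\vert\geq 1\}$ for (iii), and for the $L^1$-bound in (iv) cites the proof of \cite[Proposition 5.1]{KK03}, whose content is precisely the correlation-function expansion you write out (the paper reduces to $f\geq 0$ and uses cutoffs $1_{[-n,n]^d}f$ plus monotone convergence, which spares it the Fubini discussion).

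The genuine gap is the step you flag yourself: positivity of $\mu_{z\sigma}^{\beta\phi}(\gamma\cap B=\emptyset)$ for a set $B$ of finite $\sigma$-measure, which underlies both the ``$\eta=\emptyset$ with nonzero probability'' claim in (iii) and the positivity claim in (iv). Your proposed mechanism --- partition $B=\bigsqcup_k B_k$ and lower-bound by $\prod_k\bigl(1-\mu_{z\sigma}^{\beta\phi}(\gamma\cap B_k\neq\emptyset)\bigr)$ --- is not a valid step: bounding a joint void probability from below by the product of the marginal void probabilities is a positive-association (FKG-type) inequality, which is neither assumed nor known for Gibbs measures with general superstable, lower regular potentials, and which is heuristically wrong for repulsive interactions (emptiness of one cell makes nearby cells more favorable for particles, so void events tend to be \emph{negatively} correlated); no appeal to ``quasi-independence'' rescues an infinite product over dependent events. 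The repair needs only one split and the ingredients you already have. Take $\Lambda=B_R(0)$ with $R$ so large that your Markov estimate gives $\mu_{z\sigma}^{\beta\phi}(\gamma\cap(B\setminus\Lambda)=\emptyset)\geq 1/2$. The Ruelle equation (R) for $\Lambda$ yields (this is the standard equivalence with the DLR description, cf.~\cite{KK03}) that, conditionally on $\gamma\setminus\Lambda$, the configuration in $\Lambda$ is distributed as the Boltzmann weight times $\lambda_{z\sigma}$, normalized by the conditional partition function $Z_\Lambda(\gamma\setminus\Lambda)$; since the atom $\{\emptyset\}$ carries $\lambda_{z\sigma}$-mass one and zero energy, while $Z_\Lambda(\gamma\setminus\Lambda)\in[1,\infty)$ a.s.\ (finiteness follows from (R) with $F\equiv 1$), the conditional probability that $\Lambda$ is empty equals $Z_\Lambda(\gamma\setminus\Lambda)^{-1}>0$ a.s. As $\{\gamma\cap(B\setminus\Lambda)=\emptyset\}$ is measurable w.r.t.\ the configuration outside $\Lambda$,
\[
\mu_{z\sigma}^{\beta\phi}(\gamma\cap B=\emptyset)\;\geq\;\int_{\{\gamma\cap(B\setminus\Lambda)=\emptyset\}} Z_\Lambda(\gamma\setminus\Lambda)^{-1}\,d\mu_{z\sigma}^{\beta\phi}(\gamma)\;>\;0,
\]
being the integral of an a.s.\ strictly positive function over a set of measure at least $1/2$. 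This two-piece argument is what the paper compresses into the phrase ``by the definition of grand canonical Gibbs measures''.
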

\begin{proof}
For (i) and (ii) see e.g.~\cite[Theorem 4.1]{KK02} (note that $\mu_{z\sigma}^{\beta\phi}$ fulfills a Ruelle bound, see Remark \ref{rem}(i)). For proving (iii) let $A:=\{x\in\R^d\,|\,\vert f(x)\vert\geq 1\}$. Then $A$ has finite $\sigma$-measure and hence by (ii) it holds $\langle 1_A,\cdot\rangle\in L^1(\Gamma;\mu_{z\sigma}^{\beta\phi})$, and thus $\sharp(\gamma\cap A)<\infty$ $\mu_{z\sigma}^{\beta\phi}$-a.s.~and by the definition of grand canonical Gibbs measures $\sharp(\gamma\cap A)=0$ with positive $\mu_{z\sigma}^{\beta\phi}$-probability. Moreover, since $1_{\R^d\setminus A} f\in L^1(\R^d;\sigma)$, (ii) also implies that $\sum_{x\in \gamma\setminus A}f(x)$ converges absolutely $\mu_{z\sigma}^{\beta\phi}$-a.s. Together with (i) the assertions in (iii) follow. Since $\vert f\vert\wedge 1\leq e^1\vert e^f-1\vert$ and since moreover $C_{f,\sigma}<\infty$ implies that $f^+$ is $\sigma$-a.e.~finite, (iv) is a consequence of (iii), with the exception of the integrability statement, which is seen as follows: We may w.l.o.g.~assume that $f\geq 0$. Then $e^{\langle 1_{[-n,n]^d} f,\cdot\rangle}\uparrow e^{\langle f,\cdot\rangle}$ as $n\to \infty$. Since $C_{f^+,\sigma}\leq C_{f,\sigma}<\infty$, e.g.~the proof of \cite[Proposition 5.1]{KK03} implies that $\sup_{n\in \N}\Vert e^{\langle 1_{[-n,n]^d}f,\cdot\rangle}\Vert_{L^1(\Gamma;\mu_{z\sigma}^{\beta\phi})}<\infty$. Using the monotone convergence theorem, we therefore obtain $e^{\langle f,\cdot\rangle}\in L^1(\Gamma;\mu_{z\sigma}^{\beta\phi})$.
\end{proof}

\begin{proof}[Proof of Theorem \ref{thm}]
Part (i): We start with $\mu_{zm}^{\beta\phi}\in \mathfrak G_{zm}^{\beta\phi}$. It follows from Lemma \ref{lem}(iv) that $e^{-\beta\langle\psi,\cdot\rangle}$ is not $\mu_{zm}^{\beta\phi}$-a.s.~equal to zero and integrable w.r.t.~$\mu_{zm}^{\beta\phi}$ and hence $0< \Xi_{\psi}<\infty$. The Ruelle equation (R) as given in the appendix, putting $\mu=\mu_{zm}^{\beta\phi}$ and $\sigma=m$, implies that $e^{-\beta\langle\psi,\cdot\rangle}\mu_{zm}^{\beta\phi}$ fulfills (R) with $\sigma=\sigma_{\beta\psi}$; that can be seen by replacing $F$ by $Fe^{-\beta\langle\psi,\cdot\rangle}$ in (R). Hence, we proved that $\Xi_\psi^{-1} e^{-\beta\langle\psi,\cdot\rangle}\mu_{zm}^{\beta\phi}$ is a (tempered) grand canonical Gibbs measure for $\phi$ with intensity measure $\sigma_{\beta\psi}$. Conversely, since $C_{\beta\psi,\sigma_{\beta\psi}}=C_{-\beta\psi,m}<\infty$, when starting with $\mu_{z\sigma_{\beta\psi}}^{\beta\phi}\in \mathfrak G_{z\sigma_{\beta\psi}}^{\beta\phi}$, we obtain $e^{\beta\langle\psi,\cdot\rangle}\in L^1(\Gamma;\mu_{z\sigma_{\beta\psi}}^{\beta\phi})$ by Lemma \ref{lem}(iv). If $\psi$ is finite Lebesgue-a.e., it follows that $e^{\beta\psi}\sigma_{\beta\psi}=m$. Using this and the Ruelle equation (R), we can show as above that the normalized $e^{\beta\langle\psi,\cdot\rangle}\mu_{z\sigma_{\beta\psi}}^{\beta\phi}$ is in $\mathfrak G_{zm}^{\beta\phi}$. The last assertion of Theorem \ref{thm}(i) follows from the last assertion in Lemma \ref{lem}(iv).

Part (ii): Let $\mu_{zm}^{\beta\phi}\in\mathfrak G_{zm}^{\beta\phi}$ be translation invariant. We first consider the case $\psi=0$, but for a more general class of $F$. As before we assume that $F$ is of the form $F=g_F(\langle f,\cdot\rangle)$, but for $g_F$ we only require (in addition to smoothness) that $g_F$ and $\nabla g_F$ are exponentially bounded (i.e.~bounded in absolute value by $Ce^{a\vert\cdot\vert}$ for some $C<\infty$ and $a\in\R$, where $\vert \cdot\vert$ is Euclidean norm). For such functions, $v\in\R^d$, $\gamma\in\Gamma$ and $t\in [0,1]$ it holds $\frac{d}{dt}F(\gamma+tv)=v\nabla_\gamma^\Gamma F(\gamma+tv)$, and thus by the mean value theorem
\begin{equation}\label{eqn:dry}
\vert F(\gamma+vt)-F(\gamma)\vert\leq t \sup_{t'\in [0,1]}\vert v\nabla_\gamma^\Gamma F(\gamma+t'v)\vert\leq t\tilde C e^{\tilde a\langle 1_\Lambda,\gamma\rangle}
\end{equation}
for some $\tilde C<\infty$ and $\tilde a\in\R$, both not depending on $\gamma$, and for some open bounded $\Lambda\subset\R^d$ containing all points having distance less than $\vert v\vert$ from the support of $f$. By Lemma \ref{lem}(iv), the right-hand side of \eqref{eqn:dry} is in $L^1(\Gamma;\mu_{zm}^{\beta\phi})$, hence by Lebesgue's dominated convergence theorem $\int_{\Gamma} v\nabla_\gamma^\Gamma F\,d\mu_{zm}^{\beta\phi}=\lim_{t\to 0} \frac{1}{t} \int_\Gamma F(\cdot+vt)-F\,d\mu_{zm}^{\beta\phi}$, and the latter is equal to $0$ by the assumed translation invariance of $\mu_{zm}^{\beta\phi}$. Thus \eqref{eqn:ibp} holds for $\psi=0$. Replacing $F$ by $F e^{-\beta\langle\psi,\cdot\rangle}$, we obtain directly \eqref{eqn:ibp} also for $\psi\in C_0^\infty(\R^d)$, using that $\mu_{z\sigma_{\beta\psi}}^{\beta\phi}=\frac{1}{\Xi_\psi} e^{-\beta\langle \psi,\cdot\rangle}\mu_{zm}^{\beta\phi}$.

We derive the general result extending this by three approximation arguments. (From now on, we restrict again to $F$ as in the assertion.) First, let $\psi\in H^{1,1}(\R^d)$ be bounded and compactly supported. Using convolutions with a Dirac sequence, we obtain a sequence $(\psi_n)_{n\in\N}\subset C_0^\infty(\R^d)$ having the following properties:
\begin{enumerate}
\item[(a)] $\psi_n\to\psi$ in $H^{1,1}(\R^d)$ as $n\to\infty$.
\item[(b)] There exists $0\leq \psi_0\in L^1(\R^d;m)\cap L^\infty(\R^d;m)$ such that $\vert \psi_n\vert \leq\psi_0$.
\end{enumerate}
For later use we emphasize that we extend \eqref{eqn:ibp} to $\psi$ using only (a), (b) and the fact that \eqref{eqn:ibp} holds for all $\psi_n$. By dropping to a subsequence we may assume that $\psi_n\to \psi$ holds pointwise Lebesgue-a.e. Lemma \ref{lem}(ii) implies that $\langle \psi_0,\cdot\rangle<\infty$ holds $\mu_{zm}^{\beta\phi}$-a.e. By (b), Lebesgue's dominated convergence theorem and Lemma \ref{lem}(i) we conclude that $\langle \psi_n,\cdot\rangle\to \langle \psi,\cdot\rangle$ as $n\to\infty$ pointwise $\mu_{zm}^{\beta\phi}$-a.e. Using Lebesgue's theorem, (b) and integrability of $\vert \nabla_\gamma^\Gamma F\vert e^{\beta\langle\psi_0,\cdot\rangle}$ w.r.t.~$\mu_{zm}^{\beta\phi}$ (which follows e.g.~using Lemma \ref{lem}(iv), since $\vert \nabla_\gamma^\Gamma F\vert$ can be estimated by $C\langle 1_\Lambda,\cdot\rangle\leq C e^{\langle 1_\Lambda,\cdot\rangle}$ for some $C<\infty$ and some open bounded $\Lambda\subset\R^d$) we thus find that for $F\in\mathcal FC_b^\infty(C_0^\infty(\R^d),\Gamma)$ it holds
\begin{equation}\label{eq1}
\int_\Gamma \nabla_\gamma^\Gamma F\,e^{-\beta\langle \psi_n,\cdot\rangle}\,d\mu_{zm}^{\beta\phi}\to \int_\Gamma \nabla_\gamma^\Gamma F\,e^{-\beta\langle \psi,\cdot\rangle}\,d\mu_{zm}^{\beta\phi}
\end{equation}
as $n\to\infty$, i.e.~we have convergence of the left-hand side of \eqref{eqn:ibp}. In order to prove convergence of the right-hand side, we show that
\begin{equation}\label{eq2a}
\int_\Gamma \left\vert F \langle \nabla\psi-\nabla\psi_n,\cdot\rangle\right\vert\,e^{-\beta\langle \psi,\cdot\rangle}\,d\mu_{zm}^{\beta\phi}\to 0
\end{equation}
and
\begin{equation}\label{eq2b}
\int_\Gamma \left\vert F\,\langle \nabla\psi_n,\cdot\rangle\,\left(e^{-\beta\langle \psi,\cdot\rangle}-e^{-\beta\langle \psi_n,\cdot\rangle}\right)\right\vert\,d\mu_{zm}^{\beta\phi}\to 0
\end{equation}
as $n\to\infty$. By (a) we have $\nabla\psi_n\to\nabla\psi$ in $L^1(\R^d;m)=L^1(\R^d;\sigma_{-\beta\psi_0})$, hence Lemma \ref{lem}(ii) implies that the sequence $(\langle\nabla\psi_n,\cdot\rangle)_{n\in\N}$ converges to $\langle\nabla\psi,\cdot\rangle$ in $L^1(\Gamma;\mu_{z\sigma_{-\beta\psi_0}}^{\beta\phi})$. This implies \eqref{eq2a}, since the left-hand side of \eqref{eq2a} can be estimated by $\Xi_{-\psi_0} \Vert F\Vert_\infty\int_\Gamma \vert \langle \nabla\psi-\nabla\psi_n,\cdot\rangle \vert d\mu_{z\sigma_{-\beta\psi_0}}^{\beta\phi}$. To prove \eqref{eq2b}, we use that convergence of $(\langle\nabla\psi_n,\cdot\rangle)_{n\in\N}$ in $L^1(\Gamma;\mu_{z\sigma_{-\beta\psi_0}}^{\beta\phi})$ implies uniform integrability of this sequence w.r.t.~$\mu_{z\sigma_{-\beta\psi_0}}^{\beta\phi}$. For any $a\in\R$ we have
\begin{multline*}
\int_\Gamma \left\vert F\,\langle \nabla\psi_n,\cdot\rangle\,\left(e^{-\beta\langle \psi,\cdot\rangle}-e^{-\beta\langle \psi_n,\cdot\rangle}\right)\right\vert\,d\mu_{zm}^{\beta\phi}\\
\leq 2\Xi_{-\psi_0}\Vert F\Vert_\infty\,\int_{\vert\langle \nabla\psi_n,\cdot\rangle\vert\geq a}\vert \langle\nabla\psi_n,\cdot\rangle\vert\,\,d\mu_{z\sigma_{-\beta\psi_0}}^{\beta\phi}
+a\Vert F\Vert_\infty\int_\Gamma \left\vert e^{-\beta\langle\psi_n,\cdot\rangle}-e^{-\beta\langle \psi,\cdot\rangle}\right\vert d\mu_{zm}^{\beta\phi}.
\end{multline*}
The first summand on the right-hand side can be made arbitrarily small uniformly in $n$ by choosing $a$ large, the second converges to $0$ as $n\to\infty$ for any fixed $a\in\R$. From this \eqref{eq2b} follows. Hence, \eqref{eqn:ibp} is verified for bounded, compactly supported $\psi\in H^{1,1}(\R^d)$.

We now give the second approximation argument in order to treat the case when {$\psi\in H^{1,1}(\R^d)$} is bounded, but not necessarily compactly supported: Choose a sequence $(\chi_n)_{n\in\N}\subset C_0^\infty(\R^d)$ such that $1_{[-n,n]^d}\leq \chi_n\leq 1_{[-2n,2n]^d}$ and $\Vert \nabla\chi_n\Vert_\infty\to 0$ as $n\to\infty$, and define $\psi_n:=\chi_n\,\psi$. By the above considerations we know that \eqref{eqn:ibp} holds for all $\psi_n$, $n\in\N$. In order to extend \eqref{eqn:ibp} to $\psi$, we can apply precisely the same arguments as above, since (a) and (b) are again valid with $\psi_0=\vert \psi\vert\in L^1(\R^d;m)\cap L^\infty(\R^d;m)$.

The following (third) approximation argument extends \eqref{eqn:ibp} to general $\psi$ as in the assertion: Setting $\psi_n:=\psi\wedge n$, $n\in\N$, we again obtain an approximating sequence of functions fulfilling \eqref{eqn:ibp}. In order to prove \eqref{eq1} we use the following arguments which are a slight modification of the above ones: Since for any $n\in\N$ it holds $\psi_n^-=\psi^-$ we have $\langle \psi_n^-,\cdot\rangle=\langle\psi^-,\cdot\rangle$, and this is finite $\mu_{zm}^{\beta\phi}$-a.s.~by Lemma \ref{lem}(ii). Moreover, it holds $\langle \psi^+_n,\gamma\rangle\to \langle \psi^+,\gamma\rangle\in [0,\infty]$ as $n\to\infty$ for any $\gamma\in\Gamma$ by the monotone convergence theorem. Thus we obtain $\langle \psi_n,\cdot\rangle\to\langle \psi,\cdot\rangle$ as $n\to\infty$ pointwise $\mu_{zm}^{\beta\phi}$-a.s. Since $\vert\nabla_\gamma^\Gamma F\vert e^{-\beta\langle \psi_n,\cdot\rangle}\leq \vert \nabla_\gamma^\Gamma F\vert e^{\beta\langle \psi^-,\cdot\rangle}\in L^1(\Gamma;\mu_{zm}^{\beta\phi})$, \eqref{eq1} follows by Lebesgue's dominated convergence theorem. Moreover, since
$$
\int_\Gamma \left\vert F\,\langle \nabla\psi-\nabla\psi_n,\cdot\rangle\right\vert\,e^{-\beta\langle \psi,\cdot\rangle}\,d\mu_{zm}^{\beta\phi}\leq \Xi_{\psi}\cdot\Vert F\Vert_\infty \Vert \langle\nabla\psi-\nabla\psi_n,\cdot\rangle\Vert_{L^1(\Gamma;\mu_{z\sigma_{\beta\psi}}^{\beta\phi})},
$$
by Lemma \ref{lem}(ii) and the fact that $\Vert \nabla\psi_n-\nabla\psi\Vert_{L^1(\R^d;\sigma_{\beta\psi})}=\int_{\R^d} (1_{(n,\infty)}\circ\psi) \vert \nabla\psi \vert e^{-\beta\psi}\,dx\to 0$ as $n\to\infty$, we obtain \eqref{eq2a}. In order to show \eqref{eq2b} we define the measure $\mathcal B(\R^d\times\Gamma)\ni A\mapsto \mu_{*}(A):=\int_{\Gamma} \sum_{x\in\gamma} 1_A(x,\gamma)\,d\mu_{zm}^{\beta\phi}(\gamma)$ on the Borel $\sigma$-field $\mathcal B(\R^d\times\Gamma)$ of $\R^d\times\Gamma$. It holds
$$
\int_{\Gamma} \left \vert F\langle\nabla\psi_n,\cdot\rangle\left(e^{-\beta\langle \psi,\cdot\rangle}-e^{-\beta\langle \psi_n,\cdot\rangle}\right)\right\vert\,d\mu_{zm}^{\beta\phi}\leq \Vert F\Vert_\infty \int_{\R^d\times\Gamma} \Theta_n(x,\gamma)\,d\mu_*(x,\gamma),
$$
where $\Theta_n(x,\gamma):=\vert\nabla\psi_n(x)\vert \left\vert e^{-\beta\langle\psi,\gamma\rangle}-e^{-\beta\langle\psi_n,\gamma\rangle}\right\vert$ for $(x,\gamma)\in\R^d\times\Gamma$ such that $\langle \psi^-,\gamma\rangle<\infty$. Note that $\Theta_n$ is $\mu_*$-a.e.~defined. We have to prove convergence of $\Theta_n$ to $0$ in $L^1(\R^d\times\Gamma;\mu_*)$ as $n\to\infty$. To this end, we first note that for any $n\in\N$, $\gamma\in\Gamma$ (s.t.~$\langle \psi^-,\gamma\rangle<\infty$) and $x\in\gamma$ it holds
\begin{align*}
\Theta_n(x,\gamma)&=1_{(-\infty,n]}(\psi(x)) \vert\nabla\psi(x)\vert \left\vert e^{-\beta\langle\psi,\gamma\rangle}-e^{-\beta\langle\psi_n,\gamma\rangle}\right\vert\\
&=1_{(-\infty,n]}(\psi(x)) \vert\nabla\psi(x)\vert e^{-\beta\psi(x)}\left\vert e^{-\beta\langle\psi,\gamma\setminus \{x\}\rangle}-e^{-\beta\langle\psi_n,\gamma\setminus \{x\}\rangle}\right\vert
\end{align*}
The right-hand side converges to $0$ as $n\to\infty$. This shows that the sequence $(\Theta_n)_{n\in\N}$ converges pointwise to $0$ $\mu_{*}$-a.e. In order to obtain convergence to $0$ in $L^1(\R^d\times\Gamma;\mu_*)$ from Lebesgue's dominated convergence theorem, we note that for $\gamma\in\Gamma$ (s.t.~$\langle \psi^-,\gamma\rangle<\infty$) and $x\in\gamma$ the above equality implies
\begin{align*}
\Theta_n(x,\gamma)&\leq \vert\nabla\psi(x)\vert e^{-\beta\psi(x)} e^{\beta\langle \psi^-,\gamma\setminus\{x\}\rangle}\\
&= \vert\nabla\psi(x)\vert e^{-\beta\psi^+(x)} e^{\beta\langle\psi^-,\gamma\rangle}=:\Theta_0(x,\gamma)
\end{align*}
and
\begin{align*}
\int_{\R^d\times\Gamma} \Theta_0\,d\mu_*&=\int_{\Gamma}\langle\vert \nabla\psi\vert e^{-\beta\psi^+},\cdot\rangle e^{\langle\beta\psi^-,\cdot\rangle}\,d\mu_{zm}^{\beta\phi}=\Xi_{-\psi^-}\int_{\Gamma}\langle \vert \nabla\psi\vert e^{-\beta\psi^+},\cdot\rangle\,d\mu_{z\sigma_{-\beta\psi^-}}^{\beta\phi}\\&\leq \xi_{\sigma_{-\beta\psi^-}}\Xi_{-\psi^-} \int_{\R^d}\vert\nabla\psi\vert e^{-\beta\psi^+}e^{\beta\psi^-}\,dm
\end{align*}
by Lemma \ref{lem}(ii), and the right-hand side is finite by assumption.

\end{proof}

We now turn to the proof of Theorem \ref{thm2}. A first step to the proof is contained in the following lemma.

\begin{lemma}\label{lem2}
Let $\phi$, $\psi$ be as in Theorem \ref{thm2}. Let $\mu_{zm}^{\beta\phi}$, $\mu_{z\sigma_{\beta\psi}}^{\beta\phi}$ be as in Theorem \ref{thm}(i) and assume that \eqref{eqn:ibp} holds. Let $\varphi: \R^d\to\R$ be weakly differentiable and such that $\varphi\leq \psi$, $\varphi^-\in L^1(\R^d;m)$, $\nabla\varphi\in L^1(\R^d;m)$ and $\sup_{y\in A}\varphi(y)<\infty$ for a neighborhood $A$ of $0$. Then
\begin{equation}\label{eqn:pretinv}
\int_{\Gamma} \nabla_\gamma^\Gamma F \cdot e^{\beta\langle\varphi,\cdot\rangle}d\mu_{\beta\sigma_{\beta\psi}}^{\beta\phi}=\beta\int_{\Gamma} \langle \nabla\psi-\nabla\varphi,\cdot\rangle e^{\beta\langle\varphi,\cdot\rangle}d\mu_{\beta\sigma_{\beta\psi}}^{\beta\phi}.
\end{equation}
\end{lemma}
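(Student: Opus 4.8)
The plan is to obtain \eqref{eqn:pretinv} by feeding the \emph{non-cylinder} function $G:=F\,e^{\beta\langle\varphi,\cdot\rangle}$ into the integration by parts formula \eqref{eqn:ibp}, which we may assume for $\mu_{z\sigma_{\beta\psi}}^{\beta\phi}$ and all $F\in\mathcal FC_b^\infty(C_0^\infty(\R^d),\Gamma)$. Since $\nabla_\gamma^\Gamma$ is a derivation along the translation direction and $\nabla_\gamma^\Gamma\langle\varphi,\cdot\rangle=\langle\nabla\varphi,\cdot\rangle$, the Leibniz and chain rules give
\begin{equation}\label{eqn:plan-leibniz}
\nabla_\gamma^\Gamma\big(F\,e^{\beta\langle\varphi,\cdot\rangle}\big)=\big(\nabla_\gamma^\Gamma F\big)\,e^{\beta\langle\varphi,\cdot\rangle}+\beta\,F\,\langle\nabla\varphi,\cdot\rangle\,e^{\beta\langle\varphi,\cdot\rangle}.
\end{equation}
Hence, \emph{once} \eqref{eqn:ibp} is known to extend to $G$, i.e.\ once $\int_\Gamma\nabla_\gamma^\Gamma G\,d\mu_{z\sigma_{\beta\psi}}^{\beta\phi}=\beta\int_\Gamma G\,\langle\nabla\psi,\cdot\rangle\,d\mu_{z\sigma_{\beta\psi}}^{\beta\phi}$, substituting \eqref{eqn:plan-leibniz} and moving the $\langle\nabla\varphi,\cdot\rangle$-term to the right yields exactly \eqref{eqn:pretinv} (with the factor $F$ present on the right-hand side). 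The entire work therefore consists in justifying this extension, the weight $e^{\beta\langle\varphi,\cdot\rangle}$ being controlled through the hypothesis $\varphi\le\psi$: since $\psi-\varphi\ge0$ one has $e^{\beta\langle\varphi,\cdot\rangle}\,d\mu_{z\sigma_{\beta\psi}}^{\beta\phi}=\Xi_\psi^{-1}e^{-\beta\langle\psi-\varphi,\cdot\rangle}\,d\mu_{zm}^{\beta\phi}\le\Xi_\psi^{-1}\,d\mu_{zm}^{\beta\phi}$.

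To turn the exponential into an admissible cylinder test function, I would first choose smooth bounded $\Phi_k:\R\to\R$ with $\Phi_k(s)=e^{\beta s}$ for $s\le k$, nondecreasing, with $0\le\Phi_k'\le\beta\Phi_k\le\beta e^{\beta\,\cdot}$, and a sequence $\varphi_n\in C_0^\infty(\R^d)$ approximating $\varphi$ (mollification followed by a cut-off) with $\varphi_n\to\varphi$ pointwise a.e., $\nabla\varphi_n\to\nabla\varphi$ in $L^1(\R^d;m)$, $\varphi_n$ bounded above, and such that the weights $e^{\beta\langle\varphi_n,\cdot\rangle}$ remain uniformly $\mu_{z\sigma_{\beta\psi}}^{\beta\phi}$-dominated. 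Then $H_{n,k}:=g_F(\langle f,\cdot\rangle)\,\Phi_k(\langle\varphi_n,\cdot\rangle)$ is a genuine element of $\mathcal FC_b^\infty(C_0^\infty(\R^d),\Gamma)$, so \eqref{eqn:ibp} applies to it and, by the analogue of \eqref{eqn:plan-leibniz},
\begin{equation}\label{eqn:plan-approx}
\int_\Gamma\!\big[(\nabla_\gamma^\Gamma F)\,\Phi_k(\langle\varphi_n,\cdot\rangle)+F\,\Phi_k'(\langle\varphi_n,\cdot\rangle)\langle\nabla\varphi_n,\cdot\rangle\big]d\mu_{z\sigma_{\beta\psi}}^{\beta\phi}=\beta\!\int_\Gamma\! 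F\,\Phi_k(\langle\varphi_n,\cdot\rangle)\langle\nabla\psi,\cdot\rangle\,d\mu_{z\sigma_{\beta\psi}}^{\beta\phi}.
\end{equation}
For fixed $n$ the limit $k\to\infty$ is harmless: $\Phi_k(\langle\varphi_n,\cdot\rangle)\uparrow e^{\beta\langle\varphi_n,\cdot\rangle}$ and $\Phi_k'(\langle\varphi_n,\cdot\rangle)\to\beta e^{\beta\langle\varphi_n,\cdot\rangle}$ pointwise, and since $\varphi_n$ is bounded with compact support, Lemma \ref{lem}(iv) gives $e^{\beta\langle\varphi_n,\cdot\rangle}\in L^1(\mu_{z\sigma_{\beta\psi}}^{\beta\phi})$, so dominated convergence turns \eqref{eqn:plan-approx} into the corresponding identity for the exponential weight $e^{\beta\langle\varphi_n,\cdot\rangle}$, i.e.\ into \eqref{eqn:pretinv} with $\varphi_n$ in place of $\varphi$.

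It then remains to pass to the limit $n\to\infty$, and this is the main obstacle: the three integrands must be dominated \emph{uniformly in $n$}, and here all the hypotheses on $\varphi$ and $\psi$ enter. The term carrying $\nabla_\gamma^\Gamma F$ is controlled by $|\nabla_\gamma^\Gamma F|\le C\langle 1_\Lambda,\cdot\rangle$ together with $e^{\beta\langle\varphi_n,\cdot\rangle}\,d\mu_{z\sigma_{\beta\psi}}^{\beta\phi}\le\Xi_\psi^{-1}\,d\mu_{zm}^{\beta\phi}$. For the two terms containing $\langle\nabla\psi,\cdot\rangle$ and $\langle\nabla\varphi_n,\cdot\rangle$ this crude bound on the weight is insufficient (near $0$ one has $\nabla\psi\notin L^1(m)$), and I would instead pass to the Campbell measure $\mu_*(A)=\int_\Gamma\sum_{x\in\gamma}1_A(x,\gamma)\,d\mu_{z\sigma_{\beta\psi}}^{\beta\phi}(\gamma)$, exactly as in the $\Theta_0$-estimate at the end of the proof of Theorem \ref{thm}(ii): the singled-out point $x$ then carries the intensity density $e^{-\beta\psi(x)}$, so the pointwise factor becomes $|\nabla\psi(x)|\,e^{-\beta(\psi-\varphi_n)(x)}$ (resp.\ $|\nabla\varphi_n(x)|\,e^{-\beta(\psi-\varphi_n)(x)}$). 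Since $\varphi$ is bounded above, by $M$ say, on a neighbourhood of $0$, on $B_1(0)$ this is $\le e^{\beta M}|\nabla\psi|\,e^{-\beta\psi}$, integrable there because $\nabla\psi\in L^1(\sigma_{\beta\psi})$, while on $\R^d\setminus B_1(0)$ one uses $e^{-\beta(\psi-\varphi_n)}\le1$ together with $\nabla\psi\in L^1(\R^d\setminus B_1(0);m)$ and $\nabla\varphi\in L^1(\R^d;m)$; this yields dominating functions independent of $n$, and $\varphi^-\in L^1(\R^d;m)$ guarantees that $\langle\varphi,\cdot\rangle$, hence $e^{\beta\langle\varphi,\cdot\rangle}$, is well defined $\mu_{zm}^{\beta\phi}$-a.e. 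The genuinely delicate point is the cross term $F\,\langle\nabla\varphi_n,\cdot\rangle\,e^{\beta\langle\varphi_n,\cdot\rangle}$, in which $\nabla\varphi_n$ itself varies with $n$; I would treat it precisely as \eqref{eq2b} is treated in the proof of Theorem \ref{thm}(ii), splitting the Campbell integral over $\{|\langle\nabla\varphi_n,\cdot\rangle|\ge a\}$ and its complement and invoking the uniform integrability of $(\langle\nabla\varphi_n,\cdot\rangle)_n$ coming from its $L^1$-convergence, so that the first piece is small uniformly in $n$ and the second converges for each fixed $a$. Combining these limits in \eqref{eqn:plan-approx} provides the extension of \eqref{eqn:ibp} to $G$, and with \eqref{eqn:plan-leibniz} the proof is complete.
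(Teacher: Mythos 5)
Your overall strategy is the right one and is in fact the paper's: feed a cylinder approximation of $F\,e^{\beta\langle\varphi,\cdot\rangle}$ into \eqref{eqn:ibp}, use the Leibniz rule, and pass to the limit, with the delicate $\langle\nabla\psi,\cdot\rangle$-term handled by a change of measure and a splitting into a neighbourhood of $0$ and its complement (your final estimate matches the paper's). The gap is in the \emph{order} in which you remove the two regularizations. You let $k\to\infty$ (removing the truncation $\Phi_k$ of the exponential) at \emph{fixed} $n$, so the mollification limit $n\to\infty$ must then be carried out with the unbounded weights $e^{\beta\langle\varphi_n,\cdot\rangle}$ inside all three integrals; every one of your arguments for that limit (the crude bound for the $\nabla_\gamma^\Gamma F$-term, the Campbell-type bound for the $\langle\nabla\psi,\cdot\rangle$-term, and the uniform-integrability argument for the cross term) presupposes that the $\varphi_n$ can be chosen with the weights $e^{\beta\langle\varphi_n,\cdot\rangle}$ ``uniformly $\mu_{z\sigma_{\beta\psi}}^{\beta\phi}$-dominated''. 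You never construct such a sequence, and this is precisely where the proof can break down. The only control on the weight is $\varphi\le\psi$, which gives $e^{\beta\langle\varphi,\cdot\rangle}\,d\mu_{z\sigma_{\beta\psi}}^{\beta\phi}\le\Xi_\psi^{-1}\,d\mu_{zm}^{\beta\phi}$, and mollification destroys this inequality: $\psi$ is merely measurable (only $e^{-\beta\psi}$ is weakly differentiable) and $\varphi$ need not be locally bounded. Concretely, take $\varphi=\psi\ge 0$ consisting of smooth spikes of height $H_j$ and radius $r_j$ (away from $0$) with $\sum_j H_jr_j^{d-1}<\infty$ and $\sum_j r_j^d<\infty$, so that all hypotheses of the lemma hold; at points $y$ with $r_j\lesssim\vert y-x_j\vert\lesssim\delta_n$ one has $\psi(y)=0$ but $(\varphi\ast\rho_{\delta_n})(y)\gtrsim H_jr_j^d\delta_n^{-d}$, and for suitable $H_j,r_j$ (e.g.\ $H_j=j$, $r_j=j^{-3/(d-1)}$) the envelope $\varphi_0:=\sup_n\varphi_n$ satisfies $\int \vert e^{\beta\varphi_0}-1\vert e^{-\beta\psi}\,dm=\infty$. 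Thus the natural dominating weight is not integrable (Lemma \ref{lem}(iv) gives nothing for it), and the same missing control reappears in your Campbell step: absorbing $e^{\beta\langle\varphi_n,\gamma\setminus\{x\}\rangle}$ into a change of measure produces constants $\Xi_{\psi-\varphi_n}$ and $\xi_{\sigma_{\beta(\psi-\varphi_n)}}$ that are uniformly bounded in $n$ only if $\varphi_n\le\psi$, which mollification does not preserve.

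The paper avoids this entirely by keeping \emph{both} truncations until all approximations of $\varphi$ are finished. It first proves the identity for $\varphi\wedge K$ and with a bounded $h\in C_b^\infty(\R)$ in place of $e^{\beta\cdot}$; note that $\varphi\wedge K\in H^{1,1}(\R^d)$ is what makes the mollify-and-cut-off sequence available at all (your requirement $\nabla\varphi_n\to\nabla\varphi$ in $L^1(\R^d;m)$ for the untruncated $\varphi$ is itself not justified, since $\varphi\in L^1(\R^d;m)$ is not among the hypotheses; the paper only establishes integrability of the truncations, using $(1-e^{-\beta\psi})\in L^1$). During the limit $n\to\infty$ the factors $h(\langle\varphi_n,\cdot\rangle)$ and $h'(\langle\varphi_n,\cdot\rangle)$ are uniformly bounded, so no domination of exponential weights is needed, and the same holds for the subsequent limit $K\to\infty$. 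Only at the very end is $h$ replaced by $h_k$ with $0\le h_k\uparrow e^{\beta\cdot}$, $0\le h_k'\uparrow\beta e^{\beta\cdot}$, and that limit is taken by \emph{monotone} convergence, which requires no dominating function at all --- only finiteness of the three limiting integrals, which is where $\varphi\le\psi$, $\sup_{y\in A}\varphi(y)<\infty$ and $\nabla\psi\in L^1(\R^d\setminus B_1(0);m)$ enter, applied to the true $\varphi$, exactly via the estimate you describe. If you restructure your argument in this order (truncate $\varphi$ at level $K$, keep $h$ bounded through both limits in $n$ and $K$, finish with monotone convergence), it becomes the paper's proof; as written, the $n\to\infty$ step rests on a domination claim that is unproved and, by the example above, not obtainable from mollification.
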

\begin{proof}
Observe that for any $K\in\N$ it holds $\varphi\wedge K\in H^{1,1}(\R^d)$. Choose a sequence $(\varphi_n)_{n\in\N}\subset C_0^\infty(\R^d)$ such that $\varphi_n\to \varphi\wedge K$ and $\nabla\varphi_n\to\nabla(\varphi\wedge K)$ in $L^1(\R^d)$. We apply \eqref{eqn:ibp} to obtain
$$
\int_{\Gamma} \nabla_\gamma^\Gamma F \cdot h(\langle\varphi_n,\cdot\rangle))\,d\mu_{z\sigma_{\beta\psi}}^{\beta\phi}=-\int_{\Gamma} F\cdot h'(\langle\varphi_n,\cdot\rangle))\langle \nabla\varphi_n,\cdot\rangle\,d\mu_{z\sigma_{\beta\psi}}^{\beta\phi}+\beta\int_{\Gamma} F \cdot h(\langle\varphi_n,\cdot\rangle))\langle \nabla\psi,\cdot\rangle\,d\mu_{z\sigma_{\beta\psi}}^{\beta\phi}
$$
for $h\in C_b^\infty(\R)$ and consider the limit as $n\to\infty$. We have $\langle\varphi_n,\cdot\rangle\to \langle\varphi\wedge K,\cdot\rangle$ in $L^1(\Gamma;\mu_{z\sigma_{\beta\psi}}^{\beta\phi})$ by Lemma \ref{lem}(ii), and dropping to a subsequence we may w.l.o.g.~assume that this holds also pointwise $\mu_{z\sigma_{\beta\psi}}^{\beta\phi}$-a.s. Thus $h(\langle\varphi_n,\cdot\rangle)\to h(\langle\varphi\wedge K,\cdot\rangle)$ and $h'(\langle\varphi_n,\cdot\rangle)\to h'(\langle\varphi\wedge K,\cdot\rangle)$ pointwise $\mu_{z\sigma_{\beta\psi}}^{\beta\phi}$-a.s.~and by Lebesgue's theorem also in weak-$*$ sense in $L^\infty(\Gamma;\mu_{z\sigma_{\beta\psi}}^{\beta\phi})$. Together with the convergence $\langle\nabla\varphi_n,\cdot\rangle\to \langle \nabla(\varphi\wedge K),\cdot\rangle$ in $L^1(\Gamma;\mu_{z\sigma_{\beta\psi}}^{\beta\phi})$ and integrability of $\nabla_\gamma F$ and $\langle\nabla\psi,\cdot\rangle$ w.r.t.~$\mu_{z\sigma_{\beta\psi}}^{\beta\phi}$, we obtain 
\begin{multline*}
\int_{\Gamma} \nabla_\gamma^\Gamma F \cdot h(\langle\varphi\wedge K,\cdot\rangle))\,d\mu_{z\sigma_{\beta\psi}}^{\beta\phi}=\\
-\int_{\Gamma} F\cdot h'(\langle\varphi\wedge K,\cdot\rangle))\langle \nabla(\varphi\wedge K),\cdot\rangle\,d\mu_{z\sigma_{\beta\psi}}^{\beta\phi}+\beta\int_{\Gamma} F \cdot h(\langle\varphi\wedge K,\cdot\rangle))\langle \nabla\psi,\cdot\rangle\,d\mu_{z\sigma_{\beta\psi}}^{\beta\phi}
\end{multline*}
for any $h\in C_b^\infty(\R)$. Letting $K\to\infty$ and using similar arguments we obtain
\begin{equation}\label{eqn:pretinvb}
\int_{\Gamma} \nabla_\gamma^\Gamma F \cdot h(\langle\varphi,\cdot\rangle))\,d\mu_{z\sigma_{\beta\psi}}^{\beta\phi}=-\int_{\Gamma} F\cdot h'(\langle\varphi,\cdot\rangle))\langle \nabla\varphi,\cdot\rangle\,d\mu_{z\sigma_{\beta\psi}}^{\beta\phi}+\beta\int_{\Gamma} F \cdot h(\langle\varphi,\cdot\rangle))\langle \nabla\psi,\cdot\rangle\,d\mu_{z\sigma_{\beta\psi}}^{\beta\phi}
\end{equation}
for $h\in C_b^\infty(\R)$. Now choose a sequence $(h_k)_{k\in\N}\subset C_b^\infty(\R)$ such that $0\leq h_k\uparrow e^{\beta\cdot}$ and $0\leq h_k'\uparrow \beta e^{\beta\cdot}$ as $k\to\infty$. Taking $h=h_k$ in \eqref{eqn:pretinvb} and letting $k\to\infty$, we obtain \eqref{eqn:pretinv} from the monotone convergence theorem (when considering the positive and negative parts of all components of the integrands in \eqref{eqn:pretinvb} separately). For doing so, we only need to verify that $\nabla_\gamma^\Gamma F e^{\langle\beta\varphi,\cdot\rangle}$, $F e^{\langle\beta\varphi,\cdot\rangle} \langle\nabla\varphi,\cdot\rangle$ and $F e^{\langle\beta\varphi,\cdot\rangle} \langle\nabla\psi,\cdot\rangle$ are $\mu_{z\sigma_{\beta\psi}}^{\beta\phi}$-integrable. For the first two expressions this is clear by Lemma \ref{lem}(ii) and since $\varphi\leq \psi$ and $\nabla\varphi\in L^1(\R^d;m)$. For the last one, we compute using Lemma \ref{lem}(ii) and the assumptions on $\varphi$ and $\psi$
\begin{eqnarray*}
\lefteqn{\int_{\Gamma}\big\vert F e^{\langle\beta\varphi,\cdot\rangle} \langle\nabla\psi,\cdot\rangle \big\vert d\mu_{z\sigma_{\beta\psi}}^{\beta\phi}}\\
& &\leq \frac{\Xi_{\psi-\varphi}}{\Xi_\psi}\Vert F\Vert_\infty \int_{\Gamma} \langle\vert\nabla\psi\vert,\cdot\rangle d\mu_{z\sigma_{\beta(\psi-\varphi)}}^{\beta\phi}\leq \xi_{\sigma_{\beta(\phi-\psi)}}\frac{\Xi_{\psi-\varphi}}{\Xi_\psi}\,\Vert F\Vert_\infty \int_{\R^d} \vert\nabla\psi\vert e^{\beta(\varphi-\psi)}\,dx\\
& &\leq \xi_{\sigma_{\beta(\phi-\psi)}} \frac{\Xi_{\psi-\varphi}}{\Xi_\psi}\,\Vert F\Vert_\infty \left(\int_{\R^d\setminus A} \vert \nabla\psi\vert \,dx+e^{\beta\sup_{y\in A}\varphi(y)}\int_{A} \vert \nabla\psi\vert e^{-\beta\psi}\,dx\right)<\infty.
\end{eqnarray*}
This completes the proof of the lemma.
\end{proof}

\begin{remark}\label{rem:notgeneral}
Some comments should be given on the question why Theorem \ref{thm2} is not shown in the generality of Theorem \ref{thm}(ii). After deriving \eqref{eqn:pretinv} one might try an approximation $\varphi_n:=\psi\wedge n$ in order to extend that equation to $\varphi=\psi$, which coincides then with \eqref{eqn:ibp}. However this seems to lead to the necessity of proving that
$$
\int_{\R^d} \vert \nabla e^{\beta\psi\wedge n-\beta\psi}\vert\,dx=\int_{\R^d} 1_{\{\psi\geq n\}} \vert \nabla \psi\vert  e^{\beta n-\beta\psi}\,dx\to 0,
$$
which is wrong in general if $\psi$ is not weakly differentiable. (In contrast, for the third approximation in the proof of Theorem \ref{thm}(ii) we only needed $\int_{\R^d} 1_{\{\psi\geq n\}}\vert \nabla \psi\vert e^{-\beta\psi}\,dx\to 0$.) We avoid this problem by confining ourselves to treating the case where $\psi$ is weakly differentiable except on a very small set.
\end{remark}

\begin{proof}[Proof of Theorem \ref{thm2}]

Necessity is stated in Theorem \ref{thm}(ii). We prove sufficiency: Let $F\in\mathcal FC_b^\infty(C_0^\infty(\R^d),\Gamma)$ and let $v\in\R^d$. We need to show $\int_{\Gamma} F(\gamma+v)-F(\gamma)\,d\mu_{zm}^{\beta\phi}=0$. For $\varepsilon>0$ let $U_\varepsilon$ consist of those points of $\R^d$ which have distance less than $\varepsilon$ from the line $\{sv\,|\,s\in [0,1]\}$ and choose a function $\chi_\varepsilon\in C^\infty(\R^d)$ such that $\chi_\varepsilon=1$ on $U_{\varepsilon}$ and $\chi_\varepsilon=0$ on $\R^d\setminus U_{2\varepsilon}$. Let $g: \R\to [0,1]$ be a smooth function fulfilling $g(0)=1$ and $g(s)=0$ for all $s\in [1,\infty)$. Choose a smooth function $h_\varepsilon: \R^d\to [0,1]$ such that $h_\varepsilon=1$ outside $B_\varepsilon(0)$ and $h_\varepsilon=0$ in $B_{\varepsilon/2}(0)$. Define $\varphi_\varepsilon:= \psi h_\varepsilon+(1-h_\varepsilon)\inf_{y\in\R^d}\psi(y)$. Then $\varphi_\varepsilon$ fulfills the conditions of Lemma \ref{lem2} and we obtain for all $s\in [0,1]$
\begin{multline}\label{eqn:gurk}
\int_{\Gamma} \nabla_\gamma^\Gamma (F g(\langle \chi_\varepsilon,\cdot\rangle))(\gamma+sv) e^{\beta\langle \varphi_\varepsilon-\psi,\gamma\rangle}d\mu_{zm}^{\beta\psi}(\gamma)\\
=\beta \int_\Gamma F(\gamma+sv)\,g(\langle \chi_\varepsilon,\gamma+sv\rangle) \langle \nabla\psi-\nabla\varphi_\varepsilon,\gamma\rangle e^{\beta\langle \varphi_\varepsilon-\psi,\gamma\rangle}\,d\mu_{zm}^{\beta\phi}(\gamma).
\end{multline}
The choice of $\chi_\varepsilon$ and $g$ implies that for any $\gamma\in\Gamma$ fulfilling $\gamma\cap B_{\varepsilon}(0)\neq \emptyset$ it holds $F(\gamma+sv) g(\langle\chi_\varepsilon,\gamma+sv\rangle)=0$ and $\nabla_\gamma^\Gamma(F g(\langle \chi_\varepsilon,\cdot\rangle)(\gamma+sv)=0$, so the integrands in the above equation can only be nonzero for $\gamma\in\Gamma$ fulfilling $\gamma\cap B_\varepsilon(0)=\emptyset$. Since for all such $\gamma$ we have $\varphi_\varepsilon(x)=\psi(x)$ and $\nabla\varphi_\varepsilon(x)=\nabla\psi(x)$ for all $x\in\gamma$, it follows
$$
\int_\Gamma \nabla_\gamma^\Gamma (Fg(\langle\chi_\varepsilon,\cdot\rangle))(\gamma+sv) d\mu_{zm}^{\beta\phi}(\gamma)=0.
$$
Since $\frac{d}{ds} (F(\gamma+sv)g(\langle\chi_\varepsilon,\gamma+sv\rangle))=v\nabla_\gamma^\Gamma (Fg(\langle\chi_\varepsilon,\cdot\rangle))(\gamma+sv)$ for $\gamma\in\Gamma$, $s\in [0,1]$, it follows from the fundamental theorem of calculus that
$$
\int_{\Gamma} F(\gamma+v) g(\langle \chi_\varepsilon,\gamma+v\rangle) d\mu_{zm}^{\beta\phi}(\gamma)=\int_{\Gamma} F(\gamma) g(\langle \chi_\varepsilon,\gamma)\rangle d\mu_{zm}^{\beta\phi}(\gamma).
$$
Letting $\varepsilon\to 0$, we obtain $\chi_\varepsilon\to 0$ Lebesgue-a.e.; here we use that $d\geq 2$. Hence by Lebesgue's theorem and Lemma \ref{lem}(i) it follows
$$
\int_{\Gamma} F(\gamma+v)\,d\mu_{zm}^{\beta\phi}(\gamma)=\int_{\Gamma} F(\gamma)\,d\mu_{zm}^{\beta\phi}(\gamma),
$$
which is what we needed to show.
\end{proof}

\section*{Appendix}

Let us recall the definitions of superstability and lower regularity of a potential and some definitions from Gibbs measure theory. We call a function $\phi: \R^d\to\R\cup\{\infty\}$ a \emph{potential}, if it is measurable and even (i.e.~$\phi(x)=\phi(-x)$ for all $x\in\R^d)$. $\phi$ said to be \emph{superstable}, if there are $a>0$ and $b\geq 0$ such that for any finite configuration $\gamma$ it holds
$$
\sum_{\{x,x'\}\in\gamma}\phi(x-x')\geq a\sum_{r\in\Z^d}\sharp(\gamma\cap Q_r)^2-b\sharp \gamma,
$$
where $Q_r:=\{(x_1,\cdots,x_d)\in\R^d\,|\,r_i-1/2<x_i\leq r_i+1/2, 1\leq i\leq d\}$ for $r=(r_1,\cdots,r_d)\in\Z^d$, and $\sharp M$ denotes the cardinality of a set $M$. It is called \emph{stable}, if the above estimate holds with $a=0$. $\phi$ is called \emph{lower regular}, if there exists a decreasing function $\theta: [0,\infty)\to [0,\infty)$ such that $\int_0^\infty r^{d-1}\theta(r)\,dr<\infty$ and $\phi(x)\geq -\theta(\vert x\vert)$, $x\in\R^d$. These conditions are fulfilled by a wide class of potentials including those of Lennard-Jones type.

By $\Gamma_0$ we denote the set of finite elements of $\Gamma$, and equip it with the trace $\sigma$-field $\mathcal B_0$ of $\mathcal B$ corresponding to the inclusion $\Gamma_0\subset\Gamma$. If $\Lambda\subset\R^d$ is measurable, we set $\Gamma_{\Lambda}:=\{\gamma\in\Gamma\,|\,\gamma\subset\Lambda\}$. It can be considered as a subset of $\Gamma$ or, if $\Lambda$ is relatively compact, as a subset of $\Gamma_0$. Given a $\sigma$-finite measure $\sigma$ on $\R^d$ and an activity parameter $z>0$, one defines on $\Gamma_0$ the Lebesgue-Poisson measure $\lambda_{z\sigma}$ by
$$
\lambda_{z\sigma}(A):=\sum_{n=0}^\infty \frac{z^n}{n!} \int_{(\R^d)^n} 1_A(\{x_1,\cdots,x_n\})d\sigma(x_1)\cdots d\sigma(x_n),\quad A\in\mathcal B_0.
$$
A measure $\mu$ on $(\Gamma,\mathcal B)$ is said to be \emph{tempered} if it is supported on the set
$$
\bigcup_{N\in\N}\bigg\{\gamma\in\Gamma\,\bigg|\,\sum_{r\in[l]}\sharp(\gamma\cap Q_r)^2\leq N^2 (2l+1)^d \mbox{ for all $l\in\N$}\bigg\},
$$ 
where $[l]:=\Z^d\cap [-l,l]^d$. Let $\phi$ be a stable potential, $\beta>0$ and $z>0$. If a tempered measure $\mu$ on $(\Gamma,\mathcal B)$ fulfills the following condition (the Ruelle equation):
\begin{enumerate}
\item[(R)] For any nonnegative $\mathcal B$-measurable $F: \Gamma\to\R$ and all measurable relatively compact $\Lambda\subset\R^d$ it holds
$$
\quad\quad\int_{\Gamma}F\,d\mu=\int_{\Gamma_{\R^d\setminus \Lambda}}\int_{\Gamma_{\Lambda}} F(\gamma\cup\eta)e^{-\beta \sum_{x\in\eta,y\in\gamma}\phi(x-y)-\beta \sum_{\{x,x'\}\subset \eta} \phi(x-x')}\,d\lambda_{z\sigma}(\eta)d\mu(\gamma),
$$
\end{enumerate}
then it is said to be a tempered \emph{grand canonical Gibbs measure} for $\phi$ with intensity measure $\sigma$, inverse temperature $\beta$ and activity $z$. 

\vspace{2ex}\textbf{Acknowledgement}: The authors thank the CCM at the University of Madeira for the hospitality during the Madeira Math Encounters XXXVII in 2009, where this work was initiated. Financial support through FCT, POCTI-219, FEDER, the SFB 701 and DFG projects GR 1809/5-1 and GR 1809/8-1 is gratefully acknowledged.

\bibliography{ANote}

\begin{thebibliography}{CFGK11}

\bibitem[AKR98]{AKR98b}
S.~Albeverio, Yu.G. Kondratiev, and M.~R{\"o}ckner.
\newblock Analysis and geometry on configuration spaces: The {G}ibbsian case.
\newblock {\em J.~Funct.~Anal.}, 157:242--291, 1998.

\bibitem[CFGK11]{CFGK11}
F.~Conrad, T.~Fattler, M.~Grothaus, and Yu.G. Kondratiev.
\newblock An invariance principle for a tagged particle process in continuum
  with singular interactions.
\newblock In preparation, 2011.

\bibitem[FG10]{FG11}
T.~Fattler and M.~Grothaus.
\newblock Tagged particle process in continuum with singular interactions.
\newblock Accepted for publication in Infin.~Dimens.~Anal.~Quantum
  Probab.~Relat.~Top., 2010.

\bibitem[GP87]{GP85}
M.~Z. Guo and G.~Papanicolaou.
\newblock Self-diffusion of interacting brownian particles.
\newblock In {\em Probabilistic methods in mathematical physics}, pages
  113--151. Academic Press, Inc., Boston, Mass., 1987.

\bibitem[KK02]{KK02}
Yu.G. Kondratiev and T.~Kuna.
\newblock Harmonic analysis on configuration space{. I. Gener}al theory.
\newblock {\em Infin. Dimens. Anal. Quantum Probab. Relat. Top.},
  5(2):201--233, 2002.

\bibitem[KK03]{KK03}
Yu.G. Kondratiev and T.~Kuna.
\newblock Correlation functionals for {Gibbs measures and Rue}lle bounds.
\newblock {\em Methods Funct. Anal. Topol.}, 9(1):9--58, 2003.

\bibitem[Pap87]{Pap87}
F.~Papangelou.
\newblock On the absence of phase transition in continuous one-dimensional
  gibbs systems with no hard core.
\newblock {\em Prob.~Theo.~Relat. Fields}, 74:485--496, 1987.

\bibitem[Rue70]{Ru70}
D.~Ruelle.
\newblock Superstable interactions in classical statistical mechanics.
\newblock {\em Commun. Math. Phys.}, 18:127--159, 1970.

\bibitem[Str09]{S09}
S.~Struckmeier.
\newblock {\em Asymptotic {Behavior of some Stochastic Evolutions in
  Conti}nuum}.
\newblock PhD thesis, Bielefeld University, 2009.

\end{thebibliography}
\bibliographystyle{alpha}

\end{document}